\documentclass{llncs}

\usepackage{etex}
\usepackage{amssymb,amsmath,latexsym}
\usepackage{graphicx,color}
\usepackage{wrapfig}
\usepackage{subfig}
\usepackage{times}
\usepackage{paralist}

\usepackage[long]{optional}

\newcommand{\gcom}[1]{}
\newcommand{\gcomb}[1]{}


\usepackage{etoolbox}
\usepackage{tikz}
\usetikzlibrary{positioning}
\usetikzlibrary{patterns}
\usetikzlibrary{arrows}
\usetikzlibrary{scopes}
\usetikzlibrary{backgrounds,fit}
\usetikzlibrary{calc}
\usetikzlibrary{decorations.markings}
\usetikzlibrary{shapes}


\newcommand{\tikzArrowToText}[1]{%
\mathop{\begin{tikzpicture}[baseline={([yshift=-0.5ex]current 
bounding box.south)}]%
\draw[#1] (0,0) -- (1.1em,0);%
\end{tikzpicture}}}


\tikzset{nodelabel/.style={font=\footnotesize}}
\tikzset{edgelabel/.style={font=\footnotesize}}
\tikzset{henodenumber/.style={font=\footnotesize}}
\tikzset{undiredge/.style={shorten >=0pt, shorten <=0pt}}
\tikzset{partialedge/.style={-left to, thick}}
\tikzset{dotline/.style={dotted, shorten <=5pt, shorten >=5pt}}

\tikzset{graphedge/.style={->, >=latex}}
\tikzset{bigraphedge/.style={<->, >=latex}}
\tikzset{invgraphedge/.style={<-, >=latex}}
\tikzset{henode/.style={draw, rectangle, rounded corners=2.5pt}}
\tikzset{stdnode/.style={draw, circle}}
\tikzset{mapedge/.style={->, >=stealth', dashed, color=orange}}

\tikzset{uqelement/.style={dashed}}
\tikzset{uqnode/.style={fill=gray}}

\tikzset{ruleappedge/.style={double, ->}}
\tikzset{mor-tot-inj/.style={catt-catt}}
\tikzset{mor-tot/.style={-catt}}
\tikzset{mor-parr/.style={-catpr}}
\tikzset{mor-parl/.style={-catpl}}
\tikzset{mor-minor/.style={|-catt}}
\tikzset{mor-subgraph/.style={)-catt}}
\tikzset{mor-indsubgraph/.style={open triangle 45 reversed-catt}}
\tikzset{mor-genorder/.style={]-catt}}
\tikzset{noedge/.style={decoration={markings,mark=at position 0.5 with 
{\arrow[scale=1.6,color=red]{|}}}, postaction={decorate}}}
\tikzset{closure/.style={postaction={decorate, decoration={raise=4pt, 
markings, mark=at position 1 with {\node[scale=0.65]{\textbf{*}};}}}}}
\tikzset{morlabel/.style={midway, inner sep=2pt}}

\tikzset{backstepedge/.style={-open triangle 45, shorten <=0pt, shorten >=0pt}}

\tikzset{graphbox/.style={draw, dashed, rounded corners=2mm, outer sep=5pt, on 
background layer}}
\tikzset{graphboxgrey/.style={draw, dashed, black!50, rounded corners=2mm, 
outer sep=5pt, on background layer}}


\newdimen\arrowsize
\pgfarrowsdeclare{recatt}{catt}
{
\arrowsize=1pt
\advance\arrowsize by .5\pgflinewidth
\pgfarrowsleftextend{-2.5\arrowsize-.5\pgflinewidth}
\pgfarrowsrightextend{.5\pgflinewidth}
}
{
\pgfsetlinewidth{0.4pt}
\arrowsize=0.8pt
\advance\arrowsize by .5\pgflinewidth
\pgfsetdash{}{0pt} 
\pgfsetroundjoin 
\pgfsetroundcap 
\pgfpathmoveto{\pgfpointorigin}
\pgfpatharc{270}{190}{2.8\arrowsize}
\pgfusepathqstroke
\pgfpathmoveto{\pgfpointorigin}
\pgfpatharc{90}{170}{2.8\arrowsize}
\pgfusepathqstroke
}

\pgfarrowsdeclarereversed{catt}{recatt}{recatt}{catt}
\pgfarrowsdeclaredouble{recatcatt}{catcatt}{recatt}{catt}
\pgfarrowsdeclaredouble{catcatt}{recatcatt}{catt}{recatt}


\pgfarrowsdeclare{helperpr}{helperpr}
{
\arrowsize=1pt
\advance\arrowsize by .5\pgflinewidth
\pgfarrowsleftextend{-2.5\arrowsize-.5\pgflinewidth}
\pgfarrowsrightextend{.5\pgflinewidth}
}
{
\pgfsetlinewidth{0.55pt}
\arrowsize=0.8pt
\advance\arrowsize by .5\pgflinewidth
\pgfsetdash{}{0pt} 
\pgfsetroundjoin 
\pgfsetroundcap 
\pgfpathmoveto{\pgfpointorigin}
\pgfpatharc{90}{170}{3\arrowsize}
\pgfusepathqstroke
}

\pgfarrowsdeclare{helperpl}{helperpl}
{
\arrowsize=1pt
\advance\arrowsize by .5\pgflinewidth
\pgfarrowsleftextend{-2.5\arrowsize-.5\pgflinewidth}
\pgfarrowsrightextend{.5\pgflinewidth}
}
{
\pgfsetlinewidth{0.55pt}
\arrowsize=0.8pt
\advance\arrowsize by .5\pgflinewidth
\pgfsetdash{}{0pt} 
\pgfsetroundjoin 
\pgfsetroundcap 
\pgfpathmoveto{\pgfpointorigin}
\pgfpatharc{270}{190}{3\arrowsize}
\pgfusepathqstroke
}

\pgfarrowsdeclarereversed{rehelperpr}{rehelperpr}{helperpr}{helperpr}
\pgfarrowsdeclarereversed{rehelperpl}{rehelperpl}{helperpl}{helperpl}
\pgfarrowsdeclarealias{catpl}{catpl}{helperpl}{helperpl}
\pgfarrowsdeclarealias{catpr}{catpr}{helperpr}{helperpr}


%

%

%

%

\makeatletter
\@ifclassloaded{beamer}{%
\newcommand<>{\StaticGraphbox}[5][grbox]{%
\def\hAlignCJS{right}%
\def\vAlignCJS{top}%
\def\nameCJS{}%
\renewcommand{\do}[1]{%
  \ifthenelse{\equal{##1}{top}}{\def\vAlignCJS{top}}{%
  \ifthenelse{\equal{##1}{bottom}}{\def\vAlignCJS{bottom}}{%
  \ifthenelse{\equal{##1}{right}}{\def\hAlignCJS{right}}{%
  \ifthenelse{\equal{##1}{left}}{\def\hAlignCJS{left}}{\def\nameCJS{##1}}}}}}%
  \docsvlist{#5}%
  \begin{pgfonlayer}{background}
    \node[circle, above left= #4 and #3 of #2] (top_left) {};
    \node[circle, above right= #4 and #3 of #2] (top_right) {};
    \node[circle, below left= #4 and #3 of #2] (bottom_left) {};
    \node[circle, below right= #4 and #3 of #2] (bottom_right) {};
    \node[fit=(top_left) (bottom_right), outer sep=5pt] (#1) {} ;
    \draw#6[dashed, black!50,rounded corners=2mm] 
    ($(#1.north west) + (5pt,-5pt)$) -- ($(#1.north east) + (-5pt,-5pt)$) --
    ($(#1.south east) + (-5pt,5pt)$) -- ($(#1.south west) + (5pt,5pt)$) -- 
    cycle ;
    \node#6[left] (boxname) at (\vAlignCJS_\hAlignCJS.east) {\nameCJS};
  \end{pgfonlayer}
}}{%
\newcommand{\StaticGraphbox}[5][grbox]{%
\def\hAlignCJS{right}%
\def\vAlignCJS{top}%
\def\nameCJS{}%
\renewcommand{\do}[1]{%
  \ifthenelse{\equal{##1}{top}}{\def\vAlignCJS{top}}{%
  \ifthenelse{\equal{##1}{bottom}}{\def\vAlignCJS{bottom}}{%
  \ifthenelse{\equal{##1}{right}}{\def\hAlignCJS{right}}{%
  \ifthenelse{\equal{##1}{left}}{\def\hAlignCJS{left}}{\def\nameCJS{##1}}}}}}%
  \docsvlist{#5}%
  \begin{pgfonlayer}{background}
    \node[circle, above left= #4 and #3 of #2] (top_left) {};
    \node[circle, above right= #4 and #3 of #2] (top_right) {};
    \node[circle, below left= #4 and #3 of #2] (bottom_left) {};
    \node[circle, below right= #4 and #3 of #2] (bottom_right) {};
    \node[fit=(top_left) (bottom_right), outer sep=5pt] (#1) {} ;
    \draw[dashed, black!50,rounded corners=2mm] 
    ($(#1.north west) + (5pt,-5pt)$) -- ($(#1.north east) + (-5pt,-5pt)$) --
    ($(#1.south east) + (-5pt,5pt)$) -- ($(#1.south west) + (5pt,5pt)$) -- 
    cycle ;
    \node (boxname) at (\vAlignCJS_\hAlignCJS) {\nameCJS};
  \end{pgfonlayer}
}}
\makeatother





\usepackage{etoolbox}
\usepackage{hyperref}
\usepackage{ifthen}
\usepackage{rotating}
\usepackage{marginnote}


\newcommand{\nat}{\ensuremath{\mathbb{N}}}

\renewcommand{\phi}{\varphi}
\renewcommand{\epsilon}{\varepsilon}
\newcommand{\pto}{\rightharpoonup}

\newcommand{\ito}{\tikzArrowToText{mor-tot-inj}}

\newcommand{\upclosed}[1]{\mathord{\uparrow} #1}

\newcommand{\rotleq}{\begin{sideways}$\leq$\end{sideways}}









\newenvironment{proposition_app}[2][]{\noindent{\bf \hyperref[#2]{Proposition 
\ref*{#2}}\ifthenelse{\equal{#1}{}}{.}{ (#1).}}\it}{}
\newenvironment{lemma_app}[2][]{\noindent{\bf \hyperref[#2]{Lemma 
\ref*{#2}}\ifthenelse{\equal{#1}{}}{.}{ (#1).}}\it}{}




\newbool{bshowextranotes}

\newcommand{\extranote}[1]{\ifbool{bshowextranotes}{\marginnote{\textnormal{#1}}}{}}


\spnewtheorem{procedure}{Procedure}{\bfseries}{\rmfamily}
\newcommand{\arity}{\mathit{ar}}
\newcommand{\ignore}[1]{}
\newcommand{\subArrow}{\tikzArrowToText{mor-subgraph}}
\newcommand{\subOrder}{\subseteq}
\newcommand{\instAdd}{\diamond}
\newcommand{\qNodes}{\mathit{qn}}
\newcommand{\pseudoParagraph}[1]{\noindent\textit{#1:} }
\newcommand{\boundedPathk}{\mathcal{G}_k}
\newcommand{\instBound}[2][]{\ensuremath{\mathit{bound}\ifthenelse{\equal{#1}{}}{}{_{#1}}(#2)}}
\newcommand{\predBasis}[2][]{\ensuremath{\mathit{pb}\ifthenelse{\equal{#1}{}}{}{_{#1}}(#2)}}
\newcommand{\pred}[2][]{\ensuremath{\mathit{Pred}\ifthenelse{\equal{#1}{}}{}{_{#1}}(#2)}}
\newcommand{\predAll}[1]{\ensuremath{\mathit{Pred}^*(#1)}}

\newcommand*{\ExampleRuleScale}{0.9}
\tikzset{StdGraphGrid/.style={x=1.6cm, y=-1.5cm}}

\pagestyle{plain}
\setcounter{secnumdepth}{3}


\title{Parameterized Verification of Graph Transformation Systems with Whole 
Neighbourhood Operations\thanks{Research partially supported by DFG project 
GaReV.\opt{long}{ This paper is an extended version of \cite{DS2014} 
additionally containing the proofs.}}}

\author{Giorgio Delzanno\inst{1} \and Jan St\"uckrath\inst{2}}

\institute{
  Univerit\`a di Genova, Italy\\
  \email{giorgio.delzanno@unige.it}
  \smallskip \and
  Universit\"at~Duisburg-Essen, Germany\\
  \email{jan.stueckrath@uni-due.de}}
  
\begin{document}

\maketitle

\begin{abstract}
We introduce a new class of graph transformation systems in which rewrite 
rules can be guarded by universally quantified conditions on the neighbourhood 
of nodes. These conditions are defined via special graph patterns which may be 
transformed by the rule as well. For the new class for graph rewrite rules, we 
provide a symbolic procedure working on minimal representations of upward 
closed sets of configurations. We prove correctness and effectiveness of the 
procedure by a categorical presentation of rewrite rules as well as the 
involved order, and using results for well-structured transition systems. We 
apply the resulting procedure to the analysis of the Distributed Dining 
Philosophers protocol on an arbitrary network structure.
\end{abstract}

\section{Introduction}\label{sec:introduction}
Parameterized verification of distributed algorithms is a very challenging 
task. Distributed algorithms are often sensible to the network topology and 
they are based on communication patterns like broadcast messages and 
conditions on channels that can easily generate undecidable verification 
instances or finite-state problems of high combinatorial complexity. In order 
to naturally model interaction rules of topology-sensitive protocols it seems 
natural to consider languages based on graph rewriting and transformations as 
proposed in \cite{handbook-spo}. However, in this formalism rules can only 
match fixed subgraph in the graph they are applied to. Since we need to 
specify rules where the entire neighbourhood of a node is matched by the rule, 
we extend the standard approach by universally quantified patterns attached to 
nodes. With these patterns the matching of a left side of a rule can be 
increased until the entire neighbourhood of a node is covered. If the matching 
cannot be extended in this way the rule is not applicable, e.g.~we could 
formalize a rule which only matches a node when every incident edge is 
incoming. Additionally the matched occurrences of the patterns can also be 
changed by the rule. A similar approach are adaptive star grammars 
\cite{dhjme:adaptive-star-grammars}, the difference being that we do not 
restrict our left rule sides to be stars.

The resulting formal language can be applied to specify distributed versions 
of concurrent algorithms like Dining Philosophers in which neighbour processes 
use channels to request and grant access to a given shared resource. The 
protocol we use has been proposed by Namjoshi and Trefler in 
\cite{NamjoshiTrefler}. There requests are specified using process identifiers 
attached to edges representing point-to-point communication channels. 
Universally quantified guards are used to ensure mutual exclusive access to a 
resource. In this paper we formulate the protocol without need of introducing 
identifiers. We instead use our extended notion of graph transformation 
systems to specify ownership of a given communication link. Universally 
quantified patterns attached to a requesting node are used then as guards to 
ensure exclusive access. Erroneous or undesirable configurations in the 
algorithm can be presented by a set of minimal error configurations. We then 
use a backward procedure to check if a configuration containing one of the 
error configurations is reachable. If none is reachable, the algorithm is 
proven to be correct.

Following the approach proposed in \cite{BDKSS12,wsts-gts-framework}, we use 
basic ingredients of graph transformation and category theory (e.g. pushouts) 
to formally specify the operational semantics of our model. Parameterized 
verification for the resulting model is undecidable in general, even without 
universally quantified patterns \cite{BDKSS12}. To overcome this problem, we 
provide an approximated symbolic backward procedure using result for 
well-structured transition systems 
\cite{acjt:general-decidability,fs:well-structured-everywhere} to guarantee 
correctness and termination.
\begin{wrapfigure}{r}{6cm}%
\centering%
\scalebox{\ExampleRuleScale}{\begin{tikzpicture}[StdGraphGrid]

\begin{scope}
  \coordinate (lcenter) at (0.5,0.75);
  \node[stdnode] (ln1) at (0.5,0.5) {};
  \node[stdnode] (ln2) at (0,1.5) {};
  \node[stdnode, uqelement] (ln3) at (1,1.5) {};
  \node[henode] (le1) at (0.5,0) {$X$};
  \node[henode, uqelement] (le2) at ($(ln2)!0.5!(ln3)$) {$C$};
  \node[henode, uqelement] (le3) at ($(ln3)!0.5!(ln1)$) {$G$};
  \draw[undiredge] (ln1) -- (le1);
  \draw[graphedge, uqelement] (ln2) -- (le2) -- (ln3);
  \draw[graphedge, uqelement] (ln3) -- (le3) -- (ln1);
\end{scope}
\StaticGraphbox[L]{lcenter}{0.8cm}{1.1cm}{$L$, left}

\begin{scope}[shift={(2.25,0)}]
  \coordinate (rcenter) at (0.5,0.75);
  \node[stdnode] (rn1) at (0.5,0.5) {};
  \node[stdnode] (rn2) at (0,1.5) {};
  \node[stdnode, uqelement] (rn3) at (1,1.5) {};
  \node[henode] (re1) at (0.5,0) {$X$};
  \node[henode, uqelement] (re2) at ($(rn2)!0.5!(rn3)$) {$C$};
  \node[henode, uqelement] (re3) at ($(rn3)!0.5!(rn1)$) {$G$};
  \node[henode] (re4) at ($(rn2)!0.5!(rn1)$) {$G$};
  \draw[undiredge] (rn1) -- (re1);
  \draw[graphedge, uqelement] (rn2) -- (re2) -- (rn3);
  \draw[graphedge, uqelement] (rn3) -- (re3) -- (rn1);
  \draw[graphedge] (rn2) -- (re4) -- (rn1);
\end{scope}
\StaticGraphbox[R]{rcenter}{0.8cm}{1.1cm}{$R$, left}

\draw[mor-parl] (L) -- node [morlabel, above] {$r$} (R);

\end{tikzpicture}
\end{wrapfigure}
Although the over-approximation is based on the monotonic abstraction approach 
proposed in \cite{ADR09,AHDR08}, its application to the considered class of 
infinite-state systems is highly non trivial. In fact, our universal 
quantification approach is not restricted to process states only, but it can 
specify complex graph patterns as shown on the right. There the node marked 
with the $X$-edge represents a group where every node attached with a $G$-edge 
is a member of. The rule can be applied if every edge attached to the two solid 
nodes is matched and has the form of the dashed part (the quantification). 
Effectively the rule adds a node to a group if all other connected nodes (via a 
$C$-edge) are already members of the group.

We have implemented a prototype version of the 
algorithms in the tool \textsc{Uncover} and tested on some case-studies. For 
instance, our prototype can verify the Distributed Dining Philosophers example 
without need of additional invariants as in \cite{NamjoshiTrefler}.
\opt{short}{Due to space limitations, the proofs can be found in an extended 
version of this paper \cite{DS2014arxiv}.}
\opt{long}{Due to space limitations, the proofs can be found in 
Appendix~\ref{app:proofs}.}

\section{Preliminaries}\label{sec:preliminaries}
In this paper we use hypergraphs, a generalization of directed graphs, where 
an edge can connect an arbitrary large but finite set of nodes. Furthermore 
we  use graph morphisms to define rewriting rules.

\paragraph{Hypergraph}
Let $\Lambda$ be a finite sets of edge labels and $\arity \colon \Lambda \to 
\nat$ a function that assigns an arity to each label (including the arity 
zero).  A \emph{($\Lambda$-)hypergraph} (or simply graph) is a tuple $(V_{G}, 
E_{G}, c_{G}, l_G)$ where $V_{G}$ is a finite set of nodes, $E_{G}$ is a 
finite set of edges, $c_{G} \colon E_{G} \rightarrow V_{G}^*$ is a connection 
function and $l_G \colon E_{G} \rightarrow \Lambda$ is an edge labelling 
function. We require that $|c_G(e)| = \arity(l_G(e))$ for each edge $e \in 
E_G$. An edge $e$ is called \emph{incident} to a node $v$ if $v$ occurs in
$c_G(e)$.
An \emph{undirected path} of length $n$ in a hypergraph is an alternating
sequence $v_0, e_1, v_1, \dots, v_{n-1}, e_n, v_n$ of nodes and edges
such that for every index $1 \le i \le n$ both nodes $v_{i-1}$ and
$v_i$ are incident to $e_i$ and the undirected path contains all nodes
and edges at most once.

Let $G$, $G'$ be ($\Lambda$-)hyper\-graphs. A \emph{partial hypergraph 
morphism} (or simply \emph{morphism}) $\phi \colon G\pto G'$ consists of a 
pair of partial functions $(\phi_{V}: V_{G} \pto V_{G'}, \phi_{E}: E_{G} \pto 
E_{G'})$ such that for every $e \in E_{G}$ it holds that $l_{G}(e) = 
l_{G'}(\phi_E(e))$ and $\phi_{V}(c_{G}(e)) = c_{G'}(\phi_{E}(e))$ whenever 
$\phi_E(e)$ is defined. Furthermore if a morphism is defined on an edge, it 
must be defined on all nodes incident to it. We denote \emph{total morphisms} 
by an arrow of the form $\to$ and write $\ito$ if the total morphism is known 
to be injective.

\paragraph{Pushout}
Our rewriting formalism is the so-called \emph{single-pushout approach} (SPO) 
based on the categorical notion of pushouts in the category of graphs and 
partial graph morphisms \cite{handbook-spo}.
Given two morphisms $\phi : G_0 \pto G_1$ and $\psi : G_0 \pto G_2$, the 
\emph{pushout} of $\phi$, $\psi$ consists of the graph $G_3$ and two morphisms 
$\phi' : G_2 \pto G_3$ and $\psi' : G_1 \pto G_3$. It corresponds to a merge 
of $G_1$ and $G_2$ along a common interface $G_0$ while at the same time 
deleting every element of one of the graphs if it has a preimage in $G_0$ 
which is not mapped to an element in the other graph. It is known that in our 
category the pushout of two morphisms always exists and is unique (up to 
isomorphism). It can be computed in the following way.

Let $\equiv_V$ and $\equiv_E$ be the smallest equivalences on $V_{G_1} \cup 
V_{G_2}$ and $E_{G_1} \cup E_{G_2}$ satisfying $\phi(v) \equiv_V \psi(v)$ 
for all $v \in V_{G_0}$ and $\phi(e) \equiv_E \psi(e)$ for all $e \in 
E_{G_0}$. The nodes and edges of the pushout object $G_3$ are then all
\emph{valid} equivalence classes of $\equiv_V$ and $\equiv_E$. An equivalence 
class is \emph{valid} if it does not contain the image of some 
$x \in G_0$ for which $\phi(x)$ or $\psi(x)$ is undefined. The equivalence 
class of an edge is also considered invalid if it is incident to 
a node with an invalid equivalence class. The morphisms $\phi'$ and $\psi'$ 
map each element to its equivalence class if this class is valid and are 
undefined otherwise.

For a backward step in our procedure we also need the notion of a 
\emph{pushout complement} which is, given $\phi : G_0 \pto G_1$ and $\psi' : 
G_1 \pto G_3$, a graph $G_2$ and morphisms $\psi : G_0 \pto G_2$, $\phi' : G_2 
\pto G_3$ such that $G_3$ is the pushout of $\phi$, $\psi$. For graphs pushout 
complements not necessarily exist and if they exist there may be infinitely 
many. See \cite{HJKS:pocs2010} for a detailed description on how pushout 
complements can be computed.

\begin{wrapfigure}{r}{3cm}%
\centering%
\scalebox{1.0}{\begin{tikzpicture}[x=1.5cm,y=-1.5cm]

\node (L) at (0,0) {$L$};
\node (R) at (1,0) {$R$};
\node (G) at (0,1) {$G$};
\node (H) at (1,1) {$H$};
\draw[mor-parl] (L) -- node [midway, above] {$r$} (R);
\draw[mor-tot-inj] (L) -- node [midway, left] {$m$} (G);
\draw[mor-tot-inj] (R) -- node [midway, right] {$m'$} (H);
\draw[mor-parl] (G) -- (H);

\end{tikzpicture}
\end{wrapfigure}

\paragraph{GTS}
A \emph{rewriting rule} is a partial morphism $r\colon L\pto R$, where $L$ is 
called left-hand and $R$ right-hand side. A \emph{match} (of $r$) is a total 
and injective morphism $m : L \ito G$. Given a rule and a match, a 
\emph{rewriting step} or rule application is given by a pushout diagram as 
shown on the right, resulting in the graph $H$.
Note that injective matchings are not a restriction since non-injective 
matchings can be simulated, but are necessary for universally the quantified 
rules defined later.

A \emph{graph transformation system (GTS)} is a finite set of rules
$\mathcal{R}$.  Given a fixed set of graphs $\mathcal{G}$, a
\emph{graph transition system} on $\mathcal{G}$ generated by a graph
transformation system $\mathcal{R}$ is represented by a tuple
$(\mathcal{G},\Rightarrow)$ where $\mathcal{G}$ is the set of states
and $G \Rightarrow G'$ if and only if $G,G' \in \mathcal{G}$ and $G$
can be rewritten to $G'$ using a rule of $\mathcal{R}$.

A computation is a sequence of graphs $G_0, G_1, \ldots$ s.t. $G_i \Rightarrow 
G_{i+1}$ for $i \geq 0$. $G_0$ can reach $G_1$ if there exists a computation 
from $G_0$ to $G_1$.

\section{Graph Transformations with Universally Quantified Conditions}
To clarify the ideas and illustrate the usefulness of universally quantified 
conditions on the neighbourhood of nodes, let us consider the following 
example.

\begin{example}\label{example:main}
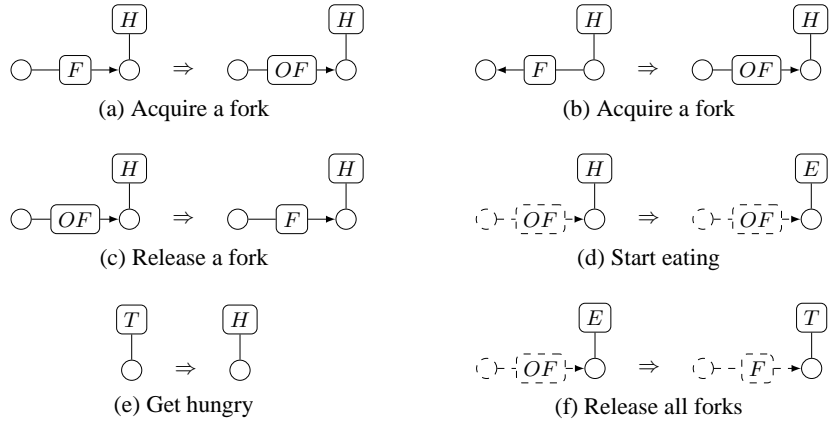
\begin{figure}[ht]
\centering
	\subfloat[Acquire a fork]{%
		\label{fig:main-example-acquire-fork1}
		\begin{minipage}[c]{0.5\textwidth}%
		\centering
		\scalebox{\ExampleRuleScale}{\begin{tikzpicture}[StdGraphGrid]

\begin{scope}
  \node[stdnode] (ln1) at (0,0) {};
  \node[stdnode] (ln2) at (1,0) {};
  \node[henode] (le1) at ($(ln1)!0.5!(ln2)$) {$F$};
  \node[henode] (le2) at (1,-0.5) {$H$};
  \draw[undiredge] (ln2) -- (le2);
  \draw[graphedge] (ln1) -- (le1) -- (ln2);
\end{scope}

\node at (1.5,0) {$\Rightarrow$};

\begin{scope}[shift={(2,0)}]
  \node[stdnode] (rn1) at (0,0) {};
  \node[stdnode] (rn2) at (1,0) {};
  \node[henode] (re1) at ($(rn1)!0.5!(rn2)$) {$OF$};
  \node[henode] (re2) at (1,-0.5) {$H$};
  \draw[undiredge] (rn2) -- (re2);
  \draw[graphedge] (rn1) -- (re1) -- (rn2);
\end{scope}

\end{tikzpicture}
		\end{minipage}}%
	\subfloat[Acquire a fork]{%
		\label{fig:main-example-acquire-fork2}
		\begin{minipage}[c]{0.5\textwidth}%
		\centering
		\scalebox{\ExampleRuleScale}{\begin{tikzpicture}[StdGraphGrid]

\begin{scope}
  \node[stdnode] (ln1) at (0,0) {};
  \node[stdnode] (ln2) at (1,0) {};
  \node[henode] (le1) at ($(ln1)!0.5!(ln2)$) {$F$};
  \node[henode] (le2) at (1,-0.5) {$H$};
  \draw[undiredge] (ln2) -- (le2);
  \draw[graphedge] (ln2) -- (le1) -- (ln1);
\end{scope}

\node at (1.5,0) {$\Rightarrow$};

\begin{scope}[shift={(2,0)}]
  \node[stdnode] (rn1) at (0,0) {};
  \node[stdnode] (rn2) at (1,0) {};
  \node[henode] (re1) at ($(rn1)!0.5!(rn2)$) {$OF$};
  \node[henode] (re2) at (1,-0.5) {$H$};
  \draw[undiredge] (rn2) -- (re2);
  \draw[graphedge] (rn1) -- (re1) -- (rn2);
\end{scope}

\end{tikzpicture}
		\end{minipage}}\\
	\subfloat[Release a fork]{%
		\label{fig:main-example-release-fork}
		\begin{minipage}[c]{0.5\textwidth}%
		\centering
		\scalebox{\ExampleRuleScale}{\begin{tikzpicture}[StdGraphGrid]

\begin{scope}
  \node[stdnode] (ln1) at (0,0) {};
  \node[stdnode] (ln2) at (1,0) {};
  \node[henode] (le1) at ($(ln1)!0.5!(ln2)$) {$OF$};
  \node[henode] (le2) at (1,-0.5) {$H$};
  \draw[undiredge] (ln2) -- (le2);
  \draw[graphedge] (ln1) -- (le1) -- (ln2);
\end{scope}

\node at (1.5,0) {$\Rightarrow$};

\begin{scope}[shift={(2,0)}]
  \node[stdnode] (rn1) at (0,0) {};
  \node[stdnode] (rn2) at (1,0) {};
  \node[henode] (re1) at ($(rn1)!0.5!(rn2)$) {$F$};
  \node[henode] (re2) at (1,-0.5) {$H$};
  \draw[undiredge] (rn2) -- (re2);
  \draw[graphedge] (rn1) -- (re1) -- (rn2);
\end{scope}

\end{tikzpicture}
		\end{minipage}}%
	\subfloat[Start eating]{%
		\label{fig:main-example-start-eating}
		\begin{minipage}[c]{0.5\textwidth}%
		\centering
		\scalebox{\ExampleRuleScale}{\begin{tikzpicture}[StdGraphGrid]

\begin{scope}
  \node[stdnode, uqelement] (ln1) at (0,0) {};
  \node[stdnode] (ln2) at (1,0) {};
  \node[henode, uqelement] (le1) at ($(ln1)!0.5!(ln2)$) {$OF$};
  \node[henode] (le2) at (1,-0.5) {$H$};
  \draw[undiredge] (ln2) -- (le2);
  \draw[graphedge, uqelement] (ln1) -- (le1) -- (ln2);
\end{scope}

\node at (1.5,0) {$\Rightarrow$};

\begin{scope}[shift={(2,0)}]
  \node[stdnode, uqelement] (rn1) at (0,0) {};
  \node[stdnode] (rn2) at (1,0) {};
  \node[henode, uqelement] (re1) at ($(rn1)!0.5!(rn2)$) {$OF$};
  \node[henode] (re2) at (1,-0.5) {$E$};
  \draw[undiredge] (rn2) -- (re2);
  \draw[graphedge, uqelement] (rn1) -- (re1) -- (rn2);
\end{scope}

\end{tikzpicture}
		\end{minipage}}\\
	\subfloat[Get hungry]{%
		\label{fig:main-example-get-hungry}
		\begin{minipage}[c]{0.5\textwidth}%
		\centering
		\scalebox{\ExampleRuleScale}{\begin{tikzpicture}[StdGraphGrid]

\begin{scope}
  \node[stdnode] (ln1) at (0,0) {};
  \node[henode] (le1) at (0,-0.5) {$T$};
  \draw[undiredge] (ln1) -- (le1);
\end{scope}

\node at (0.5,0) {$\Rightarrow$};

\begin{scope}[shift={(1,0)}]
  \node[stdnode] (rn1) at (0,0) {};
  \node[henode] (re1) at (0,-0.5) {$H$};
  \draw[undiredge] (rn1) -- (re1);
\end{scope}

\end{tikzpicture}
		\end{minipage}}%
	\subfloat[Release all forks]{%
		\label{fig:main-example-release-all-forks}
		\begin{minipage}[c]{0.5\textwidth}%
		\centering
		\scalebox{\ExampleRuleScale}{\begin{tikzpicture}[StdGraphGrid]

\begin{scope}
  \node[stdnode, uqelement] (ln1) at (0,0) {};
  \node[stdnode] (ln2) at (1,0) {};
  \node[henode, uqelement] (le1) at ($(ln1)!0.5!(ln2)$) {$OF$};
  \node[henode] (le2) at (1,-0.5) {$E$};
  \draw[undiredge] (ln2) -- (le2);
  \draw[graphedge, uqelement] (ln1) -- (le1) -- (ln2);
\end{scope}

\node at (1.5,0) {$\Rightarrow$};

\begin{scope}[shift={(2,0)}]
  \node[stdnode, uqelement] (rn1) at (0,0) {};
  \node[stdnode] (rn2) at (1,0) {};
  \node[henode, uqelement] (re1) at ($(rn1)!0.5!(rn2)$) {$F$};
  \node[henode] (re2) at (1,-0.5) {$T$};
  \draw[undiredge] (rn2) -- (re2);
  \draw[graphedge, uqelement] (rn1) -- (re1) -- (rn2);
\end{scope}

\end{tikzpicture}
		\end{minipage}}%
\caption{Modelling of the dining philosophers problem on an arbitrary net}
\label{fig:main-example}
\end{figure}

Figure~\ref{fig:main-example} shows a set of rules describing the Dining 
Philosophers Problem on an arbitrary graph structure. Each node represents a 
philosopher who can be in one of three different states: hungry ($H$), eating 
($E$) or thinking ($T$). Each state is indicated by a unary edge attached to 
the philosopher. Between two philosophers there may be a free fork (an 
$F$-edge) or a fork owned by one of the philosophers (an $OF$-edge pointing to 
its owner). Note that our directed edges are in fact hyperedges of arity two, 
where the first node is the source and the second node is the target.

Philosophers can take unowned forks 
(Figure~\ref{fig:main-example-acquire-fork1} 
and~\ref{fig:main-example-acquire-fork2}) and also release control 
(Figure~\ref{fig:main-example-release-fork}). 
If a philosopher owns all connected forks, he can start to eat 
(Figure~\ref{fig:main-example-start-eating}). The dashed part of the rule 
indicates a universal quantification, meaning that the rule can only be applied 
if all edges attached to the philosopher are part of the matching and in fact 
forks owned by him. 
At some point the philosopher finished eating, releasing all forks 
(Figure~\ref{fig:main-example-release-all-forks}) and may become 
hungry in the future (Figure~\ref{fig:main-example-get-hungry}). 
When releasing all forks, all forks owned by the philosopher are converted to 
unowned forks.

Rules matching the entire neighbourhood of a node (in the following called 
\emph{quantified node}), such as the rules in 
Figure~\ref{fig:main-example-start-eating} 
and~\ref{fig:main-example-release-all-forks} cannot be described by normal 
rewriting rules. Therefore we extend normal rules to so-called universally 
quantified rules consisting of a normal rule and a set of universal 
quantifications. The idea is to first find a matching for the rule and then 
extend the rule as well as the matching until the entire neighbourhood of 
quantified nodes is part of the matching.

We apply the rule in Figure~\ref{fig:main-example-release-all-forks} to the 
graph $G$ shown in Figure~\ref{fig:broadcast-rule-motivation}. There exists a 
match $m : L \ito G$ where $r : L \pto R$ is the rule without any use of the 
quantification. However, this matching does not match the entire neighbourhood 
of the quantified node (marked grey). Before applying the rule we have to add 
multiple copies of the quantification to $r$ generating a so-called 
instantiation $\eta$ where the extended match $\overline{m}$ contains the entire 
neighbourhood of the quantified node.

\begin{figure}[ht]
  \centering
  \scalebox{\ExampleRuleScale}{\begin{tikzpicture}[StdGraphGrid]

\begin{scope}
  \coordinate (lcenter) at (0,0);
  \node[stdnode, uqnode] (ln2) at (0,0.25) {};
  \node[henode] (le2) at (0,-0.25) {$E$};
  \draw[undiredge] (ln2) -- (le2);
\end{scope}
\StaticGraphbox[L]{lcenter}{0.4cm}{0.4cm}{$L$, left}

\begin{scope}[shift={(2.2,0)}]
  \coordinate (rcenter) at (0,0);
  \node[stdnode, uqnode] (rn2) at (0,0.25) {};
  \node[henode] (re2) at (0,-0.25) {$T$};
  \draw[undiredge] (rn2) -- (re2);
\end{scope}
\StaticGraphbox[R]{rcenter}{0.4cm}{0.4cm}{$R$, left}

\begin{scope}[shift={(0,1.8)}]
  \coordinate (olcenter) at (0,0);
  \node[stdnode] (oln1) at (-0.5,-0.5) {};
  \node[stdnode, uqnode] (oln2) at (0.5,0) {};
  \node[stdnode] (oln3) at (-0.5,0.5) {};
  \node[henode] (ole2) at (0.5,-0.5) {$E$};
  \node[henode] (ole12) at ($(oln1)!0.5!(oln2)$) {$OF$};
  \node[henode] (ole32) at ($(oln3)!0.5!(oln2)$) {$OF$};
  \draw[undiredge] (oln2) -- (ole2);
  \draw[graphedge] (oln1) -- (ole12) -- (oln2);
  \draw[graphedge] (oln3) -- (ole32) -- (oln2);
\end{scope}
\StaticGraphbox[oL]{olcenter}{0.8cm}{0.8cm}{}

\begin{scope}[shift={(2.2,1.8)}]
  \coordinate (orcenter) at (0,0);
  \node[stdnode] (orn1) at (-0.5,-0.5) {};
  \node[stdnode, uqnode] (orn2) at (0.5,0) {};
  \node[stdnode] (orn3) at (-0.5,0.5) {};
  \node[henode] (ore2) at (0.5,-0.5) {$T$};
  \node[henode] (ore12) at ($(orn1)!0.5!(orn2)$) {$F$};
  \node[henode] (ore32) at ($(orn3)!0.5!(orn2)$) {$F$};
  \draw[undiredge] (orn2) -- (ore2);
  \draw[graphedge] (orn1) -- (ore12) -- (orn2);
  \draw[graphedge] (orn3) -- (ore32) -- (orn2);
\end{scope}
\StaticGraphbox[oR]{orcenter}{0.8cm}{0.8cm}{}

\begin{scope}[shift={(-2.5,0.9)}]
  \coordinate (gcenter) at (0,0);
  \node[stdnode] (gn1) at (-0.25,-0.25) {};
  \node[stdnode, uqnode] (gn2) at (0.75,0.25) {};
  \node[stdnode] (gn3) at (-0.25,0.75) {};
  \node[henode] (ge1) at (-0.25,-0.75) {$H$};
  \node[henode] (ge2) at (0.75,-0.25) {$E$};
  \node[henode] (ge3) at (-0.25,0.25) {$H$};
  \node[henode] (ge12) at ($(gn1)!0.5!(gn2)$) {$OF$};
  \node[henode] (ge32) at ($(gn3)!0.5!(gn2)$) {$OF$};
  \node[henode] (ge13) at ($(gn1)!0.5!(gn3) + (-0.5,0)$) {$F$};
  \draw[undiredge] (gn1) -- (ge1);
  \draw[undiredge] (gn2) -- (ge2);
  \draw[undiredge] (gn3) -- (ge3);
  \draw[graphedge] (gn1) -- (ge12) -- (gn2);
  \draw[graphedge] (gn3) -- (ge32) -- (gn2);
  \draw[graphedge] (gn1) -- (ge13) -- (gn3);
\end{scope}
\StaticGraphbox[G]{gcenter}{1.2cm}{1.2cm}{$G$, left}

\draw[mor-parl] (L) -- node [morlabel, above] {$r$} (R);
\draw[mor-tot-inj, dash pattern= on 1pt off 1pt] (L) -- (oL);
\draw[mor-parl] (oL) -- node [morlabel, above] {$\eta$} (oR);
\draw[mor-tot-inj, dash pattern= on 1pt off 1pt] (R) -- (oR);
\draw[mor-tot-inj] (L) -- node [morlabel, above left] {$m$} (G);
\draw[mor-tot-inj] (oL) -- node [morlabel, above right] {$\overline{m}$} (G);

\end{tikzpicture}
  \caption{A match of a universally quantified rule has to be extended until 
  the entire neighbourhood of each quantified node is matched}
  \label{fig:broadcast-rule-motivation}
\end{figure}
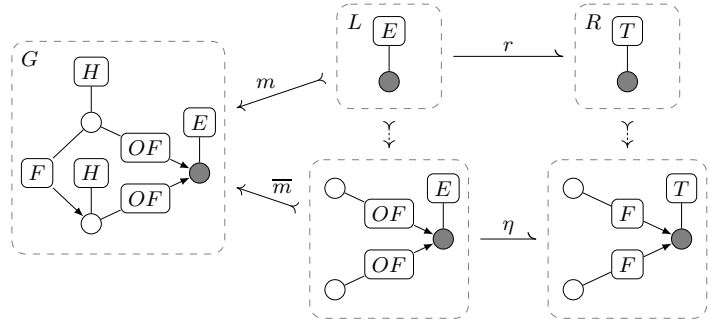
\end{example}

In the following we formalize the notion of universally quantified rules as an 
extension of normal rules and introduce instantiations via a sequence of 
recursive instantiation steps.

\begin{definition}[Universally quantified rules]\label{def:uqrule}
A \emph{universally quantified rule} is a pair $\rho = (r,U)$, where $r : L 
\pto R$ is a partial morphism and $U$ is a finite set of universal 
quantifications. A \emph{universal quantification} is a pair $(p_u,q_u) = u 
\in U$ where $p_u : L \ito L_u$ is a total injective morphism and $q_u : L_u 
\pto R_u$ is a partial morphism satisfying the restriction that $q_u(p_u(x))$ 
is defined and has exactly one preimage in $L_u$ for every $x \in L$.

With $\qNodes(u)$ we denote the set of \emph{quantified nodes} of $u$, which 
is the set of all $v \in V_L$ such that there is an edge incident to $p_u(v)$ 
which has no preimage in $L$. We denote the quantified nodes of a rule the 
same way, i.e.~$\qNodes(\rho) = \bigcup_{u \in U}\qNodes(u)$.
We require that $\qNodes(u) \neq \emptyset$ for all $u \in U$.
\end{definition}

In the rest of the paper we will use UGTS to denote 
the 
extension of GTS with universally quantified rules.

\begin{definition}[Instantiation of a universally quantified rule]
\label{def:uqrule-instantiation}
An instantiation of a universally quantified rule $\rho = (r, U)$ consists 
of a total injective morphism $\pi : L \ito \overline{L}$ and a partial 
morphism $\gamma: \overline{L} \pto \overline{R}$ and is recursively defined as 
follows:

\noindent
\parbox{0.66\textwidth}{%
\begin{itemize}
  \item The pair $(\mathit{id}_L : L \ito L, r)$, where $\mathit{id}_L$ is 
  the identity on $L$, is an instantiation of $\rho$.
  \item Let $(\pi : L \ito \overline{L}, \gamma : \overline{L} \pto 
  \overline{R})$ be an instantiation of $\rho$ and let $(p_u : L \ito L_u, q_u 
  : L_u \pto R_u) = u \in U$. Furthermore, let $\overline{L}_u$ be the pushout 
  of $\pi$, $p_u$ and let $\overline{R}_u$ be the pushout of $\gamma \circ 
  \pi$, $q_u \circ p_u$, as shown in the diagram to the right. Then $p_u' 
  \circ \pi$ and the (unique) mediating morphism $\eta$ are also an 
  instantiation of $\rho$. We write $(p_u' \circ \pi, \eta) = (\pi, 
  \gamma) \instAdd u$ to indicate that the instantiation $(\pi, \gamma)$ was 
  extended by $u$.
\end{itemize}}
\parbox{0.34\textwidth}{%
\begin{center}
  \scalebox{1.0}{\begin{tikzpicture}[x=1.2cm,y=-1.2cm]

\node (L) at (0,0) {$L$};
\node (oL) at (1,0) {$\overline{L}$};
\node (oR) at (2,0) {$\overline{R}$};
\node (Li) at (0,1) {$L_u$};
\node (oLi) at (1,1) {$\overline{L}_u$};
\node (Ri) at (0,2) {$R_u$};
\node (oRi) at (2,2) {$\overline{R}_u$};

\draw[mor-tot-inj] (L) -- node [morlabel, above] {$\pi$} (oL);
\draw[mor-parl] (oL) -- node [morlabel, above] {$\gamma$} (oR);
\draw[mor-tot-inj] (L) -- node [morlabel, left] {$p_u$} (Li);
\draw[mor-tot-inj] (oL) -- node [morlabel, left] {$p_u'$} (oLi);
\draw[mor-tot-inj] (Li) -- node [morlabel, above] {$\pi'$} (oLi);
\draw[mor-parl] (Li) -- node [morlabel, left] {$q_u$} (Ri);
\draw[mor-parl] (oLi) -- node [morlabel, above right] {$\eta$} (oRi);
\draw[mor-tot-inj] (oR) -- (oRi);
\draw[mor-parl] (Ri) -- (oRi);

\end{tikzpicture}
\end{center}}
We say that the length of an instantiation is the number of steps performed 
to generate the instantiation, where $(\mathit{id}_L,r)$ has a length of $0$.
\end{definition}

\begin{example}
Figure~\ref{fig:broadcast-rule-instantiation-example} shows a possible 
instantiation of the rule in Figure~\ref{fig:main-example-release-all-forks}. 
There is only one universal quantification $u$ and this quantification is used 
once to generate the instantiation $(p_u' \circ id_L, \eta)$. Any further 
instantiation will add an additional node and $OF$-edge to $\overline{L}_u$ and 
an additional node and $F$-edge to $\overline{R}_u$. The universally quantified 
node (i.e.~$\qNodes(u)$) is marked grey. This means that $\eta$ is only 
applicable if the grey node is matched to a node with degree (exactly) two. The 
rule application is performed by calculating the pushout of $\eta$ (not $r$) 
and a valid matching $m$. The matching is only valid if all edges 
incident to the grey node have a preimage in $\overline{L}_u$, such that an 
application will always result in all incident $OF$-edges to be replaced by 
$F$-edges. Although the number of affected edges can be arbitrary large, the 
quantification it bounded to the neighbourhood of the grey node and therefore 
the change is still local.

\begin{figure}[ht]
  \centering
  \scalebox{\ExampleRuleScale}{\begin{tikzpicture}[StdGraphGrid]

\begin{scope}
  \coordinate (lcenter) at (0,0);
  \node[stdnode, uqnode] (ln2) at (0,0.25) {};
  \node[henode] (le2) at (0,-0.25) {$E$};
  \draw[undiredge] (ln2) -- (le2);
\end{scope}
\StaticGraphbox[L]{lcenter}{0.4cm}{0.4cm}{$L$, left}

\begin{scope}[shift={(2,0)}]
  \coordinate (olcenter) at (0,0);
  \node[stdnode, uqnode] (oln2) at (0,0.25) {};
  \node[henode] (ole2) at (0,-0.25) {$E$};
  \draw[undiredge] (oln2) -- (ole2);
\end{scope}
\StaticGraphbox[oL]{olcenter}{0.4cm}{0.4cm}{$L$, left}

\begin{scope}[shift={(4,0)}]
  \coordinate (orcenter) at (0,0);
  \node[stdnode, uqnode] (orn2) at (0,0.25) {};
  \node[henode] (ore2) at (0,-0.25) {$T$};
  \draw[undiredge] (orn2) -- (ore2);
\end{scope}
\StaticGraphbox[oR]{orcenter}{0.4cm}{0.4cm}{$R$, left}

\begin{scope}[shift={(0,1.5)}]
  \coordinate (lucenter) at (0,0);
  \node[stdnode] (lun1) at (-0.5,0.25) {};
  \node[stdnode, uqnode] (lun2) at (0.5,0.25) {};
  \node[henode] (lue1) at ($(lun1)!0.5!(lun2)$) {$OF$};
  \node[henode] (lue2) at (0.5,-0.25) {$E$};
  \draw[undiredge] (lun2) -- (lue2);
  \draw[graphedge] (lun1) -- (lue1) -- (lun2);
\end{scope}
\StaticGraphbox[Lu]{lucenter}{0.8cm}{0.4cm}{$L_u$, left}

\begin{scope}[shift={(2,1.5)}]
  \coordinate (olucenter) at (0,0);
  \node[stdnode] (olun1) at (-0.5,0.25) {};
  \node[stdnode, uqnode] (olun2) at (0.5,0.25) {};
  \node[henode] (olue1) at ($(olun1)!0.5!(olun2)$) {$OF$};
  \node[henode] (olue2) at (0.5,-0.25) {$E$};
  \draw[undiredge] (olun2) -- (olue2);
  \draw[graphedge] (olun1) -- (olue1) -- (olun2);
\end{scope}
\StaticGraphbox[oLu]{olucenter}{0.8cm}{0.4cm}{$\overline{L}_u$, left}

\begin{scope}[shift={(0,3.0)}]
  \coordinate (rucenter) at (0,0);
  \node[stdnode] (run1) at (-0.5,0.25) {};
  \node[stdnode, uqnode] (run2) at (0.5,0.25) {};
  \node[henode] (rue1) at ($(run1)!0.5!(run2)$) {$F$};
  \node[henode] (rue2) at (0.5,-0.25) {$E$};
  \draw[undiredge] (run2) -- (rue2);
  \draw[graphedge] (run1) -- (rue1) -- (run2);
\end{scope}
\StaticGraphbox[Ru]{rucenter}{0.8cm}{0.4cm}{$R_u$, left}

\begin{scope}[shift={(4,3.0)}]
  \coordinate (orucenter) at (0,0);
  \node[stdnode] (orun1) at (-0.5,0.25) {};
  \node[stdnode, uqnode] (orun2) at (0.5,0.25) {};
  \node[henode] (orue1) at ($(orun1)!0.5!(orun2)$) {$F$};
  \node[henode] (orue2) at (0.5,-0.25) {$T$};
  \draw[undiredge] (orun2) -- (orue2);
  \draw[graphedge] (orun1) -- (orue1) -- (orun2);
\end{scope}

\StaticGraphbox[oRu]{orucenter}{0.8cm}{0.4cm}{$\overline{R}_u$, left}

\draw[mor-tot-inj] (L) -- node [morlabel, above] {$\mathit{id}_L$} (oL);
\draw[mor-parl] (oL) -- node [morlabel, above] {$r$} (oR);
\draw[mor-tot-inj] (L) -- node [morlabel, left] {$p_u$} (Lu);
\draw[mor-tot-inj] (oL) -- node [morlabel, left] {$p_u'$} (oLu);
\draw[mor-tot-inj] (Lu) -- node [morlabel, above, inner sep=3pt] 
{$\mathit{id}_L'$} (oLu);
\draw[mor-parl] (Lu) -- node [morlabel, left] {$q_u$} (Ru);
\draw[mor-parl] (oLu) -- node [morlabel, above right] {$\eta$} (oRu);
\draw[mor-tot-inj] (oR) -- (oRu);
\draw[mor-parl] (Ru) -- (oRu);

\end{tikzpicture}
  \caption{A possible instantiation of the rule in 
  Figure~\ref{fig:main-example-release-all-forks}}
  \label{fig:broadcast-rule-instantiation-example}
\end{figure}
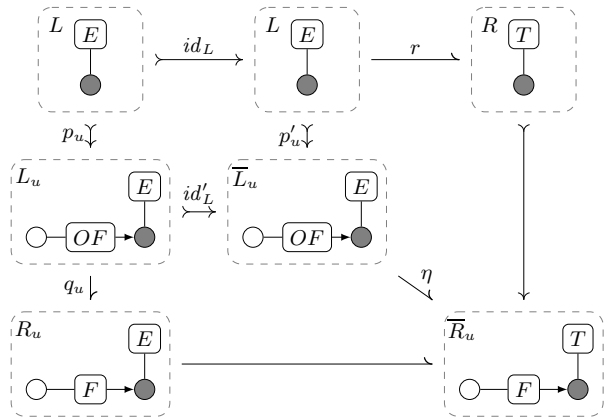
\end{example}

The order in which universal quantifications are used to generate 
instantiations can be neglected, since different sequences will still yield the 
same instantiation (up to isomorphism). Therefore we can uniquely specify 
instantiations by the number each universal quantification in its sequence.

\begin{definition}[Rule application]\label{def:uqrule-application}
Let $\rho$ be a universally quantified rule. We say that $\rho$ is applicable 
to a graph $G$, if there is an instantiation $(\pi, \gamma)$ of $\rho$ and a 
total injective match $m : \overline{L} \ito G$, such that for every $x \in 
\qNodes(\rho)$, there is no $e \in E_G$ incident to $m(\pi(x))$ without a 
preimage in $\overline{L}$. The application of $\rho$ to $G$ via $m$ results in 
the graph $H$, the pushout of $m$ and $\gamma$.
\end{definition}
We reuse the notation $G \Rightarrow G'$ to denote a rewriting step from $G$ to $G'$.
The previous definition introduces a restricted form of negative application 
condition since the existence of an edge, which cannot be mapped by a 
quantification, may block the application of a rule. 

\section{A Procedure for Coverability in UGTS}
In this paper we focus our attention on verification problems that can be 
formulated as reachability and coverability decision problems. Given an initial 
configuration $G_0$ and a target configuration $G_1$ reachability consists in 
checking whether there exists a computation from $G_0$ to $G_1$. The 
coverability problem is similar to the reachability problem, but additionally 
relies on an ordering. In this paper we use the subgraph ordering, but there 
are other suitable orders such as the minor ordering or the induced subgraph 
ordering \cite{wsts-gts-framework}.

\begin{definition}[Subgraph Ordering]
A graph $G_1$ is a subgraph of $G_2$, written $G_1 \subOrder G_2$, if there 
exists a partial, injective and surjective morphism from $G_2$ to $G_1$, 
written $\mu : G_2 \subArrow G_1$. Such morphisms are called subgraph morphisms.
\end{definition}

Given a $G$, a subgraph can always be obtained by a sequence of node and edge 
deletions. Note that due to the morphism property every edge attached to a 
deleted node must be deleted as well. Using the subgraph ordering we can 
represent sets of configurations by minimal graphs and define two variants of 
the coverability problem.

\begin{definition}[Upward Closure]
The \emph{upward closure} of a set $\mathcal{S}$ of graphs is defined as 
$\upclosed{\mathcal{S}} = \{G' \mid G \subOrder G', G \in \mathcal{S}\}$.
A set $\mathcal{S}$ is \emph{upward-closed} if it satisfies $\mathcal{S} = 
\upclosed{\mathcal{S}}$. A \emph{basis} of an upward-closed set $\mathcal{S}$ 
is a set $\mathcal{B}$ such that $\mathcal{S} = \upclosed{\mathcal{B}}$.
\end{definition}

\begin{definition}[Coverability]
Let $G_0$, $G_1$ be two graphs. The \emph{general coverability problem} is to 
decide whether from $G_0$ we can reach a graph $G_2$ such that $G_1 
\subOrder G_2$.

Let $\mathcal{G}$ a set of graphs and let $G_0, G_1 \in \mathcal{G}$. The 
\emph{restricted coverability problem} is to decide whether from $G_0$ we can 
reach a graph $G_2 \in \mathcal{G}$ such that $G_1 \subOrder G_2$ and 
every graph on the sequence from $G_0$ to $G_2$ is an element of $\mathcal{G}$.
\end{definition}

In other words, a configuration is coverable from some initial configuration if 
we can reach a configuration containing (as subgraph) a given pattern.
Although general and restricted coverability are both undecidable, we can 
obtain decidability results by using a backward search introduced for 
well-structured transition systems 
\cite{acjt:general-decidability,fs:well-structured-everywhere} as already shown 
in \cite{BDKSS12}. These systems rely on a \emph{well-quasi-order (wqo)}, which 
is a transitive reflexive order $\leq$ such that there is no infinite, strictly 
decreasing sequence of elements and no infinite antichain, a sequence of 
pairwise incomparable elements, wrt.~$\leq$. A direct consequence of this 
property is that every upward-closed set wrt.~some wqo has a finite basis. It 
has been shown that the subgraph ordering is a well-quasi-order on 
$\boundedPathk$, the class of graphs in which every  undirected path has at 
most the length $k$ \cite{d:subgraphs-wqo}. We remark that the property does 
not hold if only directed paths are restricted.

The backward search presented in this paper is a version of the general 
backward search presented in \cite{wsts-gts-framework} adapted to be compatible 
with UGTS. We denote the set of \emph{predecessors} for a set of graphs 
$\mathcal{S}$ by $\pred{\mathcal{S}} = \{G' \mid \exists G \in \mathcal{S} 
\colon G' \Rightarrow G\}$. Furthermore we denote the predecessors reachable 
within multiple step by \predAll{\mathcal{S}} and the \emph{restricted 
predecessors} by $\pred[\mathcal{G}]{\mathcal{S}} = \pred{\mathcal{S}} \cap 
\mathcal{G}$. We will present a procedure for UGTS to compute so-called 
effective pred-basis and effective $\boundedPathk$-pred-basis. An 
\emph{effective pred-basis} for a graph $G$ is a finite basis $\predBasis{G}$ 
of $\upclosed{\pred{\upclosed{\{G\}}}}$ and an \emph{effective 
$\boundedPathk$-pred-basis} is a finite basis $\predBasis[k]{G}$ of 
$\upclosed{\pred[\boundedPathk]{\upclosed{\{G\}}}}$. Using the effective 
$\boundedPathk$-pred-basis the backward search will terminate and compute a 
finite basis $\mathcal{B}$. If $G \in \upclosed{\mathcal{B}}$, then $G$ covers 
a configuration of $\mathcal{S}$ in $\Rightarrow$ (general coverability). If $G 
\notin \upclosed{\mathcal{B}}$, then $G$ does not cover a configuration of 
$\mathcal{S}$ in $\Rightarrow_{\boundedPathk}$ (no restricted coverability), 
where $\Rightarrow_{\boundedPathk}$ is the restriction $\Rightarrow \cap\ 
(\boundedPathk \times \boundedPathk)$. By using the effective pred-basis the 
backward search computes a finite basis for \predAll{\mathcal{S}}, but is not 
guaranteed to terminate.

The computation of a $\boundedPathk$-pred-basis is performed by 
Procedure~\ref{procedure:back-step}. We assume that for a graph $G$ and a rule 
$\rho$ there is an upper bound on the length of instantiations necessary to 
compute a backward step and write \instBound[\rho]{G} to denote 
such an upper bound. The existence of this upper bound is shown later on in 
Proposition~\ref{prop:inst-bound-exists}. The result of a backward step is a 
finite set $\mathcal{S}$ of graphs such that $\pred{\upclosed{\{G\}}} \subseteq 
\upclosed{\mathcal{S}}$.

\begin{procedure}[Backward Step]\label{procedure:back-step}
\mbox{}\\%
\pseudoParagraph{Input}
A rule  $\rho$ and a graph $G$.

\pseudoParagraph{Procedure}
\begin{compactenum}
  \item First compute all instantiations $(\pi : L \ito \overline{L}, \gamma 
  : \overline{L} \pto \overline{R})$ of $\rho$ up to the length 
  $\instBound[\rho]{G}$.
  \item \label{backstep:concat-subgraphs}
  For each $\gamma$ compute all subgraph morphisms $\mu : \overline{R} 
  \subArrow R'$. Note that it is sufficient to take a representative $R'$ for 
  each of the finitely many isomorphism classes.
  \item For each $\mu \circ \gamma$ compute all total injective morphisms  
  $m' : R' \to G$ (co-matches of $R'$ in $G$).
  \item \label{backstep:calculate-pocs}
  For each such morphism $m'$ calculate all minimal pushout complements $G'$, 
  $m : \overline{L} \ito G'$ of $m'$ and $\mu \circ  \gamma$ where $m$ is 
  injective and $G'$ is an element of~$\boundedPathk$. 
  Drop all $G'$ where $m$ does not satisfy the application condition of 
  Definition~\ref{def:uqrule-application}, i.e.~there is an edge incident to a 
  quantified node which is not in the matching.
\end{compactenum}

\pseudoParagraph{Result}
The set of all graphs not dropped in Step~\ref{backstep:calculate-pocs}, 
written \predBasis[k]{G}. 
\end{procedure}%

The motivation behind Step~\ref{backstep:concat-subgraphs} is that $G$ 
represents not just itself but also its upward closure. Therefore, the rule 
must also be applied to every graph larger than $G$. Instead of using partial 
co-matches we concatenate with subgraph morphisms to simulate this behaviour.

The procedure for a single backward step can be used to define a backward 
search procedure for the coverability problem for UGTS. The procedure exploits 
the property that, even if compatibility is not satisfied, 
$\pred{\upclosed{\mathcal{S}}} \subseteq 
\upclosed{\pred{\upclosed{\mathcal{S}}}}$ still holds for every set of graphs 
$\mathcal{S}$. We can iteratively compute backward steps for all minimal graphs 
$G$ of $\upclosed{\mathcal{S}}$ and check that no initial state is reached 
backwards. 

\begin{procedure}[Backward Search]\label{procedure:main}
\mbox{}\\%
\pseudoParagraph{Input} 
A natural number $k$, a set $\mathcal{R}$ of graph transformation rules and a 
finite set of final graphs $\mathcal{F}$. Start with the working set 
$\mathcal{W} = \mathcal{F}$.

\pseudoParagraph{Backward Step} For each $G \in \mathcal{W}$ add all graphs of  
\predBasis[k]{G} to $\mathcal{W}$ and minimize $\mathcal{W}$ by removing all 
graphs $H'$ for which there is a graph $H'' \in \mathcal{W}$ with $H' \neq H''$ 
and $H''\subOrder H'$. Repeat this backward steps until the sequence of  
working sets $\mathcal{W}$ becomes stationary, i.e.~for every $G \in 
\mathcal{W}$ the computation of the backward step using $G$ results in no 
change of $\mathcal{W}$. 

\pseudoParagraph{Result} The resulting set $\mathcal{W}$ contains minimal 
representatives of graphs from which a final state is coverable. This set may 
be an over-approximation, even without quantified rules.
\end{procedure}

To show the termination of Procedure~\ref{procedure:back-step} 
and~\ref{procedure:main} it is important to show the existence of a bounding 
function \instBound[\rho]{}. By the following proposition this function exists 
for every rule $\rho$, but as we will show later this bound can be tightened in 
most cases.

\newcommand{\propInstBoundExists}{%
Let $\iota$ be an instantiation of length $k$ of some rule $\rho$. If $k$ is 
larger than the number of nodes and edges of $G$, then every graph computed 
by the backward application of $\iota$ is already represented by the backward 
application of an instantiation of lower length.}
\begin{proposition}\label{prop:inst-bound-exists}
\propInstBoundExists
\end{proposition}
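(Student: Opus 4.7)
The plan is to use a pigeonhole argument: if the length $k$ of the instantiation exceeds $|V_G|+|E_G|$, then some extension step must fail to contribute any element to $G$ via the chosen backward-step data, and hence can be harmlessly dropped. First, exploiting the order-independence of instantiations (the remark after Definition~\ref{def:uqrule-instantiation}), write $\iota = (\mathit{id}_L, r) \instAdd u_1 \instAdd \cdots \instAdd u_k$, so that step $j$ adds to $\overline{L}$ a fresh copy $N_j$ of $L_{u_j} \setminus p_{u_j}(L)$ and to $\overline{R}$ a fresh copy $M_j$ of $R_{u_j}\setminus q_{u_j}(p_{u_j}(L))$. The restriction in Definition~\ref{def:uqrule} forces $\gamma(N_j)\subseteq M_j$, the distinct $M_j$'s are pairwise disjoint inside $\overline{R}$, and $|N_j|\ge 1$ because $\qNodes(u_j)\neq\emptyset$ guarantees at least one new edge.

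Fix a backward step producing the pushout complement $G'$ with $m:\overline{L}\ito G'$ from data $\mu:\overline{R}\subArrow R'$ and $m':R'\to G$, and define the \emph{contribution} of step $j$ to $G$ as $C_j := m'(\mu(M_j\cap\mathrm{dom}(\mu)))$. The sets $C_j$ are pairwise disjoint subsets of $m'(R')\subseteq G$, because the $M_j$'s are disjoint in $\overline{R}$ and both $\mu$ and $m'$ are injective on their domains. Hence $\sum_j|C_j| \leq |V_G|+|E_G| < k$, and by pigeonhole there is some index $j^*$ with $C_{j^*}=\emptyset$. This forces $\mu$ to delete every element of $M_{j^*}$, and (since $\gamma(N_{j^*})\subseteq M_{j^*}$) it also forces the partial morphism $m'\circ\mu\circ\gamma$ to be undefined on all of $N_{j^*}$.

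I then shorten $\iota$ by permuting $u_{j^*}$ to the last position and removing it, yielding an instantiation $\iota''$ of length $k-1$ with $\overline{L}''=\overline{L}\setminus N_{j^*}$, $\overline{R}''=\overline{R}\setminus M_{j^*}$, and $\gamma''=\gamma|_{\overline{L}''}$. The restrictions $\mu''=\mu|_{\overline{R}''}$ and $m''=m'$ provide valid backward-step data: because $\mu$ already deleted all of $M_{j^*}$, one has $R''=R'$, so $m''$ remains a total-injective co-match. The minimal pushout complement computed from this data should then be $G''=G'\setminus m(N_{j^*})$, which is plainly a subgraph of $G'$. Finally, the application condition of Definition~\ref{def:uqrule-application} transfers from $(G',m)$ to $(G'',m'')$: any edge of $G''$ incident to a quantified-node image $m(\pi(x))$ is already an edge of $G'$ with a unique $m$-preimage in $\overline{L}$, and injectivity of $m$ forces that preimage to lie in $\overline{L}''$ (otherwise the edge would belong to $m(N_{j^*})$, which is disjoint from $G''$).

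The main technical obstacle will be establishing the identity $G''=G'\setminus m(N_{j^*})$: one must invoke the explicit construction of minimal pushout complements from~\cite{HJKS:pocs2010} to verify that $m(N_{j^*})$ is precisely the portion of $G'$ whose equivalence class in the forward pushout to $G$ was invalidated (because $\mu\circ\gamma$ is undefined on $N_{j^*}$), and hence that removing it preserves both the pushout-complement property and its minimality. Once this identity is settled, $G''\subOrder G'$ is immediate, so every graph produced via $\iota$ is already represented in the subgraph order by a graph produced via the shorter instantiation $\iota''$.
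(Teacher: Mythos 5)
Your proof takes essentially the same route as the paper's on the combinatorial core: decompose $\iota$ into $k$ steps using order-independence (the paper's Lemma~\ref{lem:rule-instantiation-swap}), note that the fresh right-hand-side contributions $M_j$ of distinct steps are pairwise disjoint in $\overline{R}$ and are carried injectively by $\mu$ and $m'$ into $G$ (the paper counts into $R'$ rather than $G$, which is the same bound since $m'$ is injective), conclude by pigeonhole that some step's contribution is entirely deleted, permute that step to the end, and drop it. Where you diverge is in discharging the ``already represented'' claim. You propose to identify the smaller pushout complement explicitly as $G'\setminus m(N_{j^*})$ via the concrete construction of~\cite{HJKS:pocs2010}, and you rightly flag this as the main obstacle: pushout complements are not unique, so ``the minimal pushout complement computed from this data'' is not a single object, and arguing which equivalence classes are invalidated is delicate. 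The paper avoids this with two auxiliary lemmas. Lemma~\ref{lem:large2small-rule-subgraph-existence} packages your ``fresh copy'' and $\gamma(N_j)\subseteq M_j$ observations into commuting subgraph morphisms $\mu_u'\colon\overline{L}_u\subArrow\overline{L}$ and $\mu_u''\colon\overline{R}_u\subArrow\overline{R}$; this is where the one-preimage restriction of Definition~\ref{def:uqrule} is invoked to rule out merging of old and new elements, which your ``fresh copy'' phrasing silently assumes. Lemma~\ref{lem:subgraph-split-rule-minimization} then shows abstractly, by pushout composition/decomposition and preservation of subgraph morphisms under total pushouts, that for \emph{every} pushout complement $H_2$ of the longer composed morphism there is \emph{some} pushout complement $H_1\subOrder H_2$ of the shorter one --- no explicit identification of $G''$ and no minimality analysis needed. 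I would recommend finishing that way rather than through the explicit construction. Your observation that the application condition of Definition~\ref{def:uqrule-application} transfers to the shortened data is a worthwhile addition that the paper leaves implicit.
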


The following two lemmas prove that Procedure~\ref{procedure:back-step} 
computes a finite basis of an over-approximation of the restricted predecessors.

\newcommand{\lemmaProcCorrectnessOne}{%
The set \predBasis[k]{G} is a finite subset of 
\pred{\upclosed{\{G\}}} and $\predBasis[k]{G} \subseteq \boundedPathk$.}
\begin{lemma}\label{lem:proc-correctness1}
\lemmaProcCorrectnessOne
\end{lemma}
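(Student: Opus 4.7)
The plan is to verify the three assertions separately: that $\predBasis[k]{G}$ is finite, that every element of it lies in $\boundedPathk$, and that every element is a predecessor of some graph in $\upclosed{\{G\}}$. I would argue all three by inspecting Procedure~\ref{procedure:back-step} step by step.

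For finiteness I would first invoke Proposition~\ref{prop:inst-bound-exists} in Step~1 to restrict attention to instantiations of length at most $\instBound[\rho]{G}$; since each extension selects a quantification from the finite set accompanying $\rho$, the set of possible extension sequences up to this length is finite, yielding only finitely many instantiations $(\pi,\gamma)$ up to isomorphism, each with $\overline{L}$ and $\overline{R}$ of bounded size. In Step~2 the subgraph morphism $\mu : \overline{R} \subArrow R'$ exhibits $R'$ as a subgraph of the finite graph $\overline{R}$, so only finitely many isomorphism classes of $R'$ (and corresponding $\mu$) arise. In Step~3 there are only finitely many total injective morphisms $m' : R' \to G$ because both $R'$ and $G$ are finite. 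For Step~4 I would appeal to the fact that the minimal pushout complements of a pair of partial morphisms between finite hypergraphs form a finite, computable set (cf.~\cite{HJKS:pocs2010}); the additional requirements that $m$ be injective, that $G' \in \boundedPathk$, and that the application condition be satisfied can only discard elements of this set. Multiplying these four finite multiplicities gives finiteness of $\predBasis[k]{G}$.

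The containment $\predBasis[k]{G} \subseteq \boundedPathk$ is immediate from Step~4, which explicitly drops any candidate $G'$ that is not in $\boundedPathk$.

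The main work is to establish $\predBasis[k]{G} \subseteq \pred{\upclosed{\{G\}}}$. Fix $G' \in \predBasis[k]{G}$, witnessed by an instantiation $(\pi,\gamma)$, a subgraph morphism $\mu : \overline{R} \subArrow R'$, an injective co-match $m' : R' \ito G$, and an injective match $m : \overline{L} \ito G'$ satisfying the application condition of Definition~\ref{def:uqrule-application}. Define $H$ to be the pushout of $m$ and $\gamma$, so that $G' \Rightarrow H$ via $(\pi,\gamma)$ and $m$. By the pushout complement computed in Step~4, $G$ is itself the pushout of $m$ and $\mu \circ \gamma$; it then remains to show $G \subOrder H$, and this is where I expect the main obstacle to be. The natural candidate for a subgraph morphism $H \subArrow G$ acts as $\mu$ on the $\overline{R}$-part of $H$ (being undefined exactly where $\mu$ is) and as the identity on the $G'$-part of $H$ whenever the corresponding element is not additionally deleted in forming $G$ (that is, when it has no preimage in $\overline{L}$ on which $\gamma$ is defined but $\mu \circ \gamma$ is not). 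I would verify, by unfolding the equivalence-class presentation of the SPO pushout from Section~\ref{sec:preliminaries}, that this map is a well-defined partial hypergraph morphism which is injective and surjective onto $G$; the injectivity of $m$, $m'$ and $\mu$ prevents unwanted identifications, and the compatibility of $\mu \circ \gamma$ with $\gamma$ ensures that the deletion patterns match up. This exhibits $G$ as a subgraph of $H$ and completes the inclusion.
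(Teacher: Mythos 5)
Your proposal is correct, but for the substantive inclusion $\predBasis[k]{G} \subseteq \pred{\upclosed{\{G\}}}$ it takes a genuinely different, more self-contained route than the paper. The paper's own proof defers that part entirely to the framework paper \cite{wsts-gts-framework} (proved there for unquantified rules and conflict-free matches), adding only the remark that injective matches are automatically conflict-free, and spends its effort on the finiteness count, for which it proves a dedicated auxiliary result (Lemma~\ref{lem:minimal-pocs-finite}) that the minimal injective pushout complements form a finite computable set. You instead reconstruct the inclusion from scratch: given $G'$ produced in Step~\ref{backstep:calculate-pocs}, you form $H$ as the pushout of $m$ and $\gamma$, note that $G' \Rightarrow H$ is a legal step because the application condition was checked in Step~\ref{backstep:calculate-pocs}, and exhibit $G \subOrder H$ by building the subgraph morphism $H \subArrow G$ by hand from the equivalence-class presentation of the SPO pushout. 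What you are really proving there is the fact the paper uses as a black box (in the proof of Lemma~\ref{lem:proc-correctness2}) that subgraph morphisms are pushout-closed: by Lemma~\ref{lem:pushout-stacking} the outer pushout of $\mu \circ \gamma$ and $m$ decomposes into the pushout $H$ of $\gamma$ and $m$ followed by the pushout of the subgraph morphism $\mu$ along the total injective morphism $\overline{R} \ito H$ (total and injective by Lemma~\ref{lem:preserve-injection}), and the latter pushout yields the desired subgraph morphism $H \subArrow G$; your explicit map is exactly that construction, and your verification sketch (injectivity of $m$, $m'$, $\mu$ preventing identifications; matching deletion patterns) is the right one. The trade-off is that the paper's proof is short but leans on external and auxiliary results, whereas yours is longer but elementary and makes the witnessing step $G' \Rightarrow H$ with $H \in \upclosed{\{G\}}$ fully explicit. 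Your finiteness argument coincides with the paper's, except that you cite \cite{HJKS:pocs2010} for the finiteness of minimal pushout complements where the paper proves this itself as Lemma~\ref{lem:minimal-pocs-finite}; that finiteness does require an argument (arbitrary pushout complements may be infinitely many), so either a citation or that lemma is genuinely needed at that point.
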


\newcommand{\lemmaProcCorrectnessTwo}{%
It holds that $\upclosed{\predBasis[k]{G}} \supseteq 
\upclosed{\pred[\boundedPathk]{\upclosed{\{G\}}}}$.}
\begin{lemma}\label{lem:proc-correctness2}
\lemmaProcCorrectnessTwo
\end{lemma}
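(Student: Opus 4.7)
The plan is to reverse-engineer, from any element of $\pred[\boundedPathk]{\upclosed{\{G\}}}$, a graph produced by Procedure~\ref{procedure:back-step}. Fix $G'\in\pred[\boundedPathk]{\upclosed{\{G\}}}$, so $G'\in\boundedPathk$ and there is a rewriting step $G'\Rightarrow G''$ with $G\subOrder G''$. This step is witnessed by a rule $\rho$, an instantiation $(\pi,\gamma\colon\overline L\pto\overline R)$ of $\rho$, an injective match $m\colon\overline L\ito G'$ satisfying the quantifier condition of Definition~\ref{def:uqrule-application}, the co-match $m^{*}\colon\overline R\pto G''$ arising in the pushout, an accompanying pushout morphism $\gamma'\colon G'\pto G''$, and a subgraph morphism $\sigma\colon G''\subArrow G$.

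First I would normalise the instantiation length by appealing to Proposition~\ref{prop:inst-bound-exists}: if $(\pi,\gamma)$ has length exceeding $\instBound[\rho]{G}$, then any graph produced backwards from $G$ using it is dominated in the subgraph order by one produced from a shorter instantiation, so without loss of generality the instantiation length is at most $\instBound[\rho]{G}$, which is precisely the range enumerated in Step~1 of the procedure.

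Next I would identify the data that Steps~2--4 of Procedure~\ref{procedure:back-step} pick out. Factor the partial morphism $\sigma\circ m^{*}\colon\overline R\pto G$ as $\overline R\xrightarrow{\mu}R'\xrightarrow{m'}G$, where $R'$ is (up to isomorphism) the image of $\sigma\circ m^{*}$, $\mu$ is the resulting subgraph morphism, and $m'$ is a total injective morphism; these are exactly the objects ranged over in Steps~2 and~3. Step~4 then computes the minimal pushout complements $H$ of $\mu\circ\gamma$ and $m'$. I would argue that some such minimal $H$ satisfies $H\subOrder G'$: the restriction of $G'$ to the match image $m(\overline L)$ together with the $\gamma'$-preimage of the elements of $G''$ preserved by $\sigma$ already forms a (not necessarily minimal) pushout complement of $\mu\circ\gamma$ and $m'$, so any minimal such complement embeds into $G'$. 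Since $G'\in\boundedPathk$ the subgraph $H$ lies in $\boundedPathk$ as well, and because the quantifier condition for $m$ holds in $G'$ and all edges it concerns lie entirely inside $m(\overline L)$, the condition transfers verbatim to $H$; thus $H$ is not discarded, $H\in\predBasis[k]{G}$, and $G'\in\upclosed{\predBasis[k]{G}}$.

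The main obstacle is the categorical bookkeeping in the third step: verifying in the SPO setting that the restriction of $G'$ described above really does form a pushout complement of $\mu\circ\gamma$ and $m'$, and that minimality of $H$ is consistent with the subgraph relation $H\subOrder G'$. Standard properties of SPO pushouts — in particular that a pushout factors through any injection into its target — underpin the argument, but care is required because $\mu$ is only partial, so $R'$ need not reflect every element of $\overline R$, and the construction of $H$ must correctly account for nodes and edges that $\gamma$ creates but that $\sigma$ then discards.
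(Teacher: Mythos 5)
Your plan follows essentially the same route as the paper's proof: factor $\sigma\circ m^{*}$ through its image to obtain exactly the subgraph morphism $\mu$ and co-match $m'$ enumerated in Steps~2--3, exhibit a (non-minimal) pushout complement sitting inside $G'$, and conclude via minimality; the ``categorical bookkeeping'' you flag as the main obstacle is precisely the pushout-closedness of subgraph morphisms, which the paper imports from \cite{wsts-gts-framework} rather than reproving, and your concrete description of the complement ($m(\overline L)$ plus the context elements surviving $\sigma\circ\gamma'$) coincides with the graph $G_3$ that lemma produces. Two points to tighten: only \emph{some} minimal pushout complement --- namely one lying below your exhibited complement $K\subOrder G'$ --- is guaranteed to embed into $G'$, not every minimal one; and the quantifier condition must be checked on that minimal complement $H$ rather than on $K$, which needs the surjectivity of the subgraph morphism $K\subArrow H$ together with totality of the induced match, exactly as in the paper's argument.
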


We recapitulate our main result in the following proposition.

\begin{proposition}\label{prop:proc-correctness}
For each graph $G$, \predBasis[k]{G} is an effective 
$\boundedPathk$-pred-basis. Furthermore, Procedure~\ref{procedure:main} 
terminates and computes an over-approximation of all configurations in 
$\boundedPathk$ from which a final configuration is coverable.
\end{proposition}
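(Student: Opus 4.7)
The plan is to split the proposition into two independent claims: (i) that $\predBasis[k]{G}$ is an effective $\boundedPathk$-pred-basis for every graph $G$, and (ii) that Procedure~\ref{procedure:main} terminates and yields an over-approximation of the configurations in $\boundedPathk$ from which $\mathcal{F}$ is coverable.

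For (i), both inclusions are essentially bundled in the two preceding lemmas. Lemma~\ref{lem:proc-correctness1} gives finiteness and $\predBasis[k]{G} \subseteq \pred{\upclosed{\{G\}}} \cap \boundedPathk$. Since every graph in the latter intersection is, by definition of restricted predecessors, contained in $\pred[\boundedPathk]{\upclosed{\{G\}}}$, we obtain the inclusion $\upclosed{\predBasis[k]{G}} \subseteq \upclosed{\pred[\boundedPathk]{\upclosed{\{G\}}}}$. Lemma~\ref{lem:proc-correctness2} delivers the reverse inclusion. Combined these yield the basis property.

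For (ii), I would first argue termination using the well-quasi-order property. By~\cite{d:subgraphs-wqo}, $\subOrder$ is a wqo on $\boundedPathk$; and by construction each working set $\mathcal{W}_n$ is a minimal antichain basis of an upward-closed subset $\upclosed{\mathcal{W}_n} \subseteq \boundedPathk$ (the minimization step is sound because it only discards graphs whose upward closure is already covered). The iteration is monotone in the sense $\upclosed{\mathcal{W}_n} \subseteq \upclosed{\mathcal{W}_{n+1}}$, and the wqo property forbids a strictly ascending infinite chain of upward-closed sets; hence the sequence stabilizes after finitely many steps, which is exactly the termination criterion stated in Procedure~\ref{procedure:main}. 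Correctness of the over-approximation is then shown by induction on $n$: I would prove that $\upclosed{\mathcal{W}_n}$ contains every $G \in \boundedPathk$ from which some $F \in \mathcal{F}$ is coverable in at most $n$ steps of $\Rightarrow_{\boundedPathk}$. The base case $n=0$ is immediate since $\mathcal{W}_0 = \mathcal{F}$. For the step, given $G \Rightarrow_{\boundedPathk} G' \in \upclosed{\mathcal{W}_n}$, pick $H \in \mathcal{W}_n$ with $H \subOrder G'$; then $G$ lies in $\pred[\boundedPathk]{\upclosed{\{H\}}}$, and by part~(i) this set is contained in $\upclosed{\predBasis[k]{H}} \subseteq \upclosed{\mathcal{W}_{n+1}}$.

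The main obstacle is keeping the argument honest about why the result is only an \emph{over}-approximation. The quantified matching condition of Definition~\ref{def:uqrule-application} acts as a negative condition: a rule applicable at $G$ may fail at a supergraph $G''\supOrder G$ because additional edges at a quantified node block some instantiation. Hence strict compatibility between $\Rightarrow$ and $\subOrder$ fails, and the backward step can produce predecessors $G'$ such that no graph in $\upclosed{\{G'\}}$ truly rewrites into $\upclosed{\{G\}}$. What still survives, and what the inductive step above actually uses, is only the weaker inclusion $\pred{\upclosed{\mathcal{S}}} \subseteq \upclosed{\pred{\upclosed{\mathcal{S}}}}$ highlighted in the paragraph preceding Procedure~\ref{procedure:main}; this is exactly what yields \emph{soundness of coverage} (no coverable configuration is missed) without giving precision. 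Termination, by contrast, uses only the wqo property and is unaffected by this loss of compatibility, so combining the two gives the stated proposition.
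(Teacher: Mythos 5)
Your proposal is correct and follows the same skeleton as the paper's proof: the basis property comes from combining Lemma~\ref{lem:proc-correctness1} and Lemma~\ref{lem:proc-correctness2} into the equality $\upclosed{\predBasis[k]{G}} = \upclosed{\pred[\boundedPathk]{\upclosed{\{G\}}}}$, and termination comes from the fact that $\subOrder$ is a wqo on $\boundedPathk$, so the increasing chain of upward closures of the working sets must become stationary. Two differences are worth recording. First, you prove the over-approximation claim explicitly, by induction on the number of backward steps; the paper's proof does not argue this at all and leans silently on the remark preceding Procedure~\ref{procedure:main} that $\pred{\upclosed{\mathcal{S}}} \subseteq \upclosed{\pred{\upclosed{\mathcal{S}}}}$ holds even without compatibility. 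Your version is the more complete one here, and your explanation of \emph{why} only an over-approximation is obtained (the quantified-neighbourhood requirement of Definition~\ref{def:uqrule-application} is a negative application condition, so applicability is not preserved when passing to supergraphs) matches the paper's intent. Second, the paper devotes part of its proof to \emph{effectiveness} in the sense of computability: the number of instantiations to consider is bounded via Proposition~\ref{prop:inst-bound-exists}, the minimal pushout complements restricted to $\boundedPathk$ are finite and computable, and decidability of the subgraph ordering makes the minimization computable. You only take finiteness from Lemma~\ref{lem:proc-correctness1}; under the paper's literal definition (an effective pred-basis is a \emph{finite} basis) that suffices, but since the proposition's point is that the basis can actually be computed, you should add that half-sentence to close the argument the way the paper does.
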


\begin{proof}
By Lemma~\ref{lem:proc-correctness1} and~\ref{lem:proc-correctness2} we know 
that $\upclosed{\predBasis[k]{G}} = 
\upclosed{\pred[\boundedPathk]{\upclosed{\{G\}}}}$ and thus \predBasis[k]{G} is 
a $\boundedPathk$-pred-basis.
According to Proposition~\ref{prop:inst-bound-exists} for every $\rho \in 
\mathcal{R}$ the number of necessary instantiation steps is 
bounded by $\instBound[\rho]{G}$, thus, the number of instantiations is fine. 
For each instantiation the minimal pushout complements restricted to 
$\boundedPathk$ are finite and computable. Since the subgraph ordering is 
decidable the minimization is computable and \predBasis[k]{G} is effective.

Since the subgraph ordering is a wqo on $\boundedPathk$, every infinite 
increasing sequence of upward-closed set becomes stationary. The 
upward-closures of the working sets $\mathcal{W}$ form such an infinite 
increasing sequence, thus the termination criteria of 
Procedure~\ref{procedure:main} will be satisfied at some point. \qed
\end{proof}

\subsection*{A Variant of \predBasis[k]{} Without Path Bound}
In Step~\ref{backstep:calculate-pocs} of Procedure~\ref{procedure:back-step} 
every graph which is not an element of $\boundedPathk$ is dropped. This is 
needed to guarantee that the working set of Procedure~\ref{procedure:main} 
becomes stationary and the search terminates. However, this restriction can be 
dropped to obtain a backward search which solves the general coverability 
problem. Termination is not guaranteed, but correctness can be proven 
analogously to the restricted variant, as already shown in 
\cite{wsts-gts-framework}. Let \predBasis{} be 
Procedure~\ref{procedure:back-step} without the restriction to $\boundedPathk$. 
We summarize the decidability of this second variant in the following 
proposition.

\begin{proposition}\label{prop:proc-correctness-unrestricted}
For each graph $G$, \predBasis{G} is an effective pred-basis. Furthermore, when 
using \predBasis{} instead of \predBasis[k]{}, Procedure~\ref{procedure:main} 
computes an over-approximation of all configurations from which a final 
configuration is coverable.
\end{proposition}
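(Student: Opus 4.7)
The plan is to mirror the proof of Proposition~\ref{prop:proc-correctness} and simply drop the $\boundedPathk$-related bookkeeping. Concretely, my first task would be to establish unrestricted analogues of Lemmas~\ref{lem:proc-correctness1} and~\ref{lem:proc-correctness2}: that $\predBasis{G}$ is a finite subset of $\pred{\upclosed{\{G\}}}$ and that $\upclosed{\predBasis{G}} \supseteq \upclosed{\pred{\upclosed{\{G\}}}}$. The reverse inclusion $\upclosed{\predBasis{G}} \subseteq \upclosed{\pred{\upclosed{\{G\}}}}$ is immediate from the construction, since each element of $\predBasis{G}$ is a one-step predecessor of some $H \in \upclosed{\{G\}}$; together the two inclusions yield the effective pred-basis property.

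For finiteness, Proposition~\ref{prop:inst-bound-exists} continues to bound the number of relevant instantiations. For a fixed instantiation $(\pi,\gamma)$, the isomorphism classes of subgraph co-domains $R'$ of $\overline{R}$ are finite, the total injective co-matches $m' : R' \to G$ are finite, and by \cite{HJKS:pocs2010} the minimal pushout complements of each pair $(m', \mu \circ \gamma)$ form a finite, computable set whose size is controlled by the deleted part of the rule rather than by $G$. The $\boundedPathk$-filter in Step~\ref{backstep:calculate-pocs} was only used to force termination of Procedure~\ref{procedure:main}, not for finiteness of $\predBasis{G}$, so removing it preserves both finiteness and effective computability. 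The proof of the remaining inclusion transports verbatim from Lemma~\ref{lem:proc-correctness2}: given any $G' \Rightarrow H$ with $G \subOrder H$, one backtracks the rewriting step along the subgraph embedding and extracts a minimal pushout complement witnessing $G' \in \upclosed{\predBasis{G}}$, an argument that never invoked the path bound.

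Correctness of Procedure~\ref{procedure:main} with $\predBasis{}$ substituted for $\predBasis[k]{}$ then follows from the inclusion $\pred{\upclosed{\mathcal{S}}} \subseteq \upclosed{\pred{\upclosed{\mathcal{S}}}}$ that the paper has already noted holds even in the presence of universally quantified rules. Iterated application produces an increasing chain of working sets whose upward closures cover every backward-reachable graph, and whenever the chain becomes stationary the resulting $\mathcal{W}$ is a sound over-approximation of the minimal configurations from which a final state is coverable; no termination claim is made or needed. The main obstacle I anticipate is precisely the finiteness of minimal pushout complements once the $\boundedPathk$ restriction is lifted: one must rule out an unbounded family of pairwise-incomparable minimal complements, which is handled by the local, rule-bounded nature of pushout complements in the SPO setting, where the ``missing context'' of a minimal complement is determined by the deleted image of $\overline{L}$ in $\overline{R}$ and not by $G$.
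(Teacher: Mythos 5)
Your proposal is correct and follows essentially the route the paper takes: the paper itself only remarks that correctness of the unrestricted variant ``can be proven analogously to the restricted variant,'' and your plan makes that precise in the expected way, reusing Proposition~\ref{prop:inst-bound-exists} for the instantiation bound, the finiteness and computability of minimal pushout complements (the paper's Lemma~\ref{lem:minimal-pocs-finite}, proved without any path restriction) for effectiveness, and observing that the $\boundedPathk$ filter enters the proof of Lemma~\ref{lem:proc-correctness2} only via the downward-closure step that becomes vacuous once the filter is removed. No gaps.
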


\subsection*{Experimental Results}
We added support for universally quantified rules to the \textsc{Uncover} tool. 
This tool can perform the backward search for the subgraph ordering and the 
minor ordering (a coarser order compared to subgraphs). Both variants of the 
backward search are implemented, but a timeout might occur when using the 
unresticted variant. However, given the rules in Figure~\ref{fig:main-example} 
and the error graphs in Figure~\ref{fig:main-example-errors} the unrestricted 
variant terminates after 12 seconds and results in a set of 12 minimal graphs. 
Two of these graphs are the initial error graphs and two other computed graphs 
are shown in Figure~\ref{fig:main-example-another-errors}. Every minimal graph 
contains a node in the state $E$. Since initially no philosopher is eating, the 
initial configuration is not represented and none of the initial error graphs 
is reachable. This proves that two adjacent philosophers cannot be eating at the
same time.

\begin{figure}[ht]
\begin{minipage}[c]{0.45\textwidth}
  \subfloat{%
    \label{fig:main-example-error1}
    \begin{minipage}[c]{0.5\textwidth}%
    \centering
    \scalebox{\ExampleRuleScale}{\begin{tikzpicture}[StdGraphGrid]

\node[stdnode] (n1) at (0,0) {};
\node[stdnode] (n2) at (1,0) {};
\node[henode] (e1) at (0,-0.5) {$E$};
\node[henode] (e2) at (1,-0.5) {$E$};
\node[henode] (e3) at ($(n1)!0.5!(n2)$) {$F$};
\draw[undiredge] (n1) -- (e1);
\draw[undiredge] (n2) -- (e2);
\draw[graphedge] (n1) -- (e3) -- (n2);

\end{tikzpicture}
  \end{minipage}}%
  \subfloat{%
    \label{fig:main-example-error2}
    \begin{minipage}[c]{0.5\textwidth}%
    \centering
    \scalebox{\ExampleRuleScale}{\begin{tikzpicture}[StdGraphGrid]

\node[stdnode] (n1) at (0,0) {};
\node[stdnode] (n2) at (1,0) {};
\node[henode] (e1) at (0,-0.5) {$E$};
\node[henode] (e2) at (1,-0.5) {$E$};
\node[henode] (e3) at ($(n1)!0.5!(n2)$) {$OF$};
\draw[undiredge] (n1) -- (e1);
\draw[undiredge] (n2) -- (e2);
\draw[graphedge] (n1) -- (e3) -- (n2);

\end{tikzpicture}
  \end{minipage}}%
\caption{Two error configurations in the Dining Philosophers Problem}
\label{fig:main-example-errors}
\end{minipage}\hfill
\begin{minipage}[c]{0.45\textwidth}
  \subfloat{%
    \label{fig:main-example-another-error1}
    \begin{minipage}[c]{0.5\textwidth}%
    \centering
    \scalebox{\ExampleRuleScale}{\begin{tikzpicture}[StdGraphGrid]

\node[stdnode] (n1) at (0,0) {};
\node[stdnode] (n2) at (1,0) {};
\node[henode] (e1) at (0,-0.5) {$H$};
\node[henode] (e2) at (1,-0.5) {$E$};
\node[henode] (e3) at ($(n1)!0.5!(n2)$) {$F$};
\draw[undiredge] (n1) -- (e1);
\draw[undiredge] (n2) -- (e2);
\draw[graphedge] (n1) -- (e3) -- (n2);

\end{tikzpicture}
  \end{minipage}}%
  \subfloat{%
    \label{fig:main-example-another-error2}
    \begin{minipage}[c]{0.5\textwidth}%
    \centering
    \scalebox{\ExampleRuleScale}{\begin{tikzpicture}[StdGraphGrid]

\node[stdnode] (n1) at (0,0) {};
\node[stdnode] (n2) at (1,0) {};
\node[henode] (e1) at (0,-0.5) {$E$};
\node[henode] (e2) at (1,-0.5) {$T$};
\node[henode] (e3) at ($(n1)!0.5!(n2)$) {$OF$};
\draw[undiredge] (n1) -- (e1);
\draw[undiredge] (n2) -- (e2);
\draw[graphedge] (n1) -- (e3) -- (n2);

\end{tikzpicture}
  \end{minipage}}%
\caption{Two other error graphs computed by the backward search}
\label{fig:main-example-another-errors}
\end{minipage}
\end{figure}

\section{Optimizations}
In this section we discuss and formalize some optimizations
that can be applied to the basic backward procedure described in 
the previous section.

\subsubsection*{Lifting the Application Condition to a Post Conditions}
In Procedure~\ref{procedure:back-step} the application condition is checked in 
Step~\ref{backstep:calculate-pocs} for each pushout complement. However, by 
lifting the application condition over the instantiation we can check 
beforehand whether the backward step yields new graphs. We show the lifting in 
the following lemma. 

\newcommand{\lemmaPreToPostCondition}{%
Let $\rho$ be a rule, $(\pi : L \ito \overline{L}, \gamma : \overline{L} \pto 
\overline{R})$ an instantiation of $\rho$ and $m : \overline{R} \ito G$ a 
co-match of the instantiation to some graph $G$. If there is a node $x \in 
\qNodes(\rho)$ where $m(\gamma(\pi(x))$ is defined and attached to an edge $e$ 
without preimage in $\overline{R}$, then there is no pushout complement $H$ of 
$\gamma$, $m$ satisfying the condition of 
Definition~\ref{def:uqrule-application}.}
\begin{lemma}\label{lem:pre-to-post-condition}
\lemmaPreToPostCondition
\end{lemma}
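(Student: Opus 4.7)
The plan is to prove the contrapositive: assume a pushout complement $H$ of $\gamma$ and $m$ exists, consisting of a total injective backward match $m_{\overline{L}} : \overline{L} \ito H$ and a partial morphism $h : H \pto G$, and further assume that $H$ together with $m_{\overline{L}}$ satisfies the application condition of Definition~\ref{def:uqrule-application}. From this I will derive a contradiction with the hypothesis on $e$.

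First I would locate a preimage of $e$ inside $H$. The explicit pushout construction recalled in the preliminaries shows that every edge of $G$ is either the image under $m$ of an edge of $\overline{R}$ or the image under $h$ of an edge of $H$. Since by assumption $e$ has no preimage in $\overline{R}$ under $m$, there must exist $e' \in E_H$ with $h(e') = e$. Next I would show that $e'$ has no preimage in $\overline{L}$ under $m_{\overline{L}}$. Suppose $e'' \in E_{\overline{L}}$ satisfies $m_{\overline{L}}(e'') = e'$. If $\gamma(e'')$ is defined, commutativity of the pushout square yields $e = h(e') = h(m_{\overline{L}}(e'')) = m(\gamma(e''))$, exhibiting a preimage of $e$ in $\overline{R}$, a contradiction. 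If $\gamma(e'')$ is undefined, then the equivalence class of $m_{\overline{L}}(e'') = e'$ in the pushout construction is invalid, which forces $h(e')$ to be undefined, again a contradiction.

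The main obstacle is to confirm that $e'$ is incident to $m_{\overline{L}}(\pi(x))$ in $H$. By the morphism property, $h$ is defined on every node incident to $e'$ and respects the connection function, so $e'$ has some incident node $v' \in V_H$ with $h(v') = m(\gamma(\pi(x)))$. The natural candidate is $v' = m_{\overline{L}}(\pi(x))$, and this is the only possibility unless $\gamma$ collapses $\pi(x)$ with some other node of $\overline{L}$. I plan to rule out this delicate case by an induction on the length of the instantiation, using the unique-preimage restriction imposed on each $q_u$ in Definition~\ref{def:uqrule}: at each quantification step the images of nodes already present in $L$ are neither deleted nor merged with fresh material, so the pushout equivalence used to build $G$ cannot identify $m_{\overline{L}}(\pi(x))$ with another preimage of $m(\gamma(\pi(x)))$ that carries the edge $e'$; by injectivity of $m_{\overline{L}}$ and $\pi$ this pins down $v' = m_{\overline{L}}(\pi(x))$. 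The resulting edge $e'$ is then incident to $m_{\overline{L}}(\pi(x))$ without preimage in $\overline{L}$, contradicting the application condition of Definition~\ref{def:uqrule-application} for $H$ at the quantified node $x$ and completing the proof.
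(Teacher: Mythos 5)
Your overall strategy coincides with the paper's: assume a pushout complement $(H,\ m' : \overline{L} \ito H,\ \gamma' : H \pto G)$ of $\gamma$ and $m$ exists, pull the offending edge $e$ back to an edge $e'$ of $H$, show that $e'$ has no preimage in $\overline{L}$, and conclude that the application condition of Definition~\ref{def:uqrule-application} is violated at the quantified node $x$. Your justifications for the existence of $e'$ (every edge of the pushout object is the image of an edge of $\overline{R}$ or of $H$) and for the absence of a preimage of $e'$ in $\overline{L}$ (case split on whether $\gamma(e'')$ is defined, using commutativity in one case and invalidity of the equivalence class in the other) are correct and in fact spell out what the paper compresses into a single sentence.

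The step you single out as the main obstacle --- that $e'$ is incident to $m'(\pi(x))$ itself and not merely to some other $\gamma'$-preimage of $m(\gamma(\pi(x)))$ --- is indeed the only delicate point, but your proposed repair does not close it. The unique-preimage restriction of Definition~\ref{def:uqrule} constrains only the quantification morphisms $q_u$: it guarantees that no element of $L_u \setminus p_u(L)$ is merged onto $q_u(p_u(L))$, i.e.~that $\pi(x)$ is never identified with \emph{fresh} material added by an instantiation step. It says nothing about the base morphism $r : L \pto R$, which is an arbitrary partial morphism and may satisfy $r(x) = r(y)$ for distinct $x, y \in V_L$. That identification is already present in the length-$0$ instantiation $(\mathit{id}_L, r)$ and is preserved by every subsequent instantiation step, so the base case of your induction fails. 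In that situation the $\gamma'$-preimages of $m(\gamma(\pi(x)))$ in $H$ are all the nodes $m'(w)$ with $\gamma(w) = \gamma(\pi(x))$, and the node of $e'$ that maps to $m(\gamma(\pi(x)))$ could be $m'(\pi(y))$ with $y \notin \qNodes(\rho)$, in which case no violation of the application condition follows. To be fair, the paper's own proof asserts the incidence claim with the bare phrase ``by commutativity of the diagram'' and has exactly the same blind spot; both arguments are complete only under the implicit additional assumption that $\gamma$ does not identify a quantified node with another image of a node of $L$ (which holds, for instance, whenever $r$ is injective on nodes, as in all rules of Example~\ref{example:main}).
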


\subsubsection*{Tightening the Upper Bound of Instantiations}
The bound on the length of instantiations proven to exist in 
Proposition~\ref{prop:inst-bound-exists} can be improved depending on the rule 
used. Let $\rho = (r : L \pto R, U)$ be a rule. Obviously $\instBound[\rho]{G} 
= 0$ if $U = \emptyset$. The same holds if instantiations only increase the 
left side of the rule, i.e.~for every $u \in U$ given the instantiation 
$(id_L,r) \instAdd u = (\pi : L \ito \overline{L}_u, \gamma : \overline{L}_u 
\pto \overline{R}_u)$, the graphs $\overline{R}_u$ and $R$ are isomorphic.

A more common situation is that quantifications do not add edges to the right 
side of the instantiations which are solely incident to nodes of the original 
rule $r$. This is case for all rules used in Example~\ref{example:main}. The 
bound can be reduced as shown below.

\newcommand{\lemmaBetterBound}{%
Let $\rho = (r : L \pto R, U)$ and let $(id_L,r) \instAdd u = (\pi : L \ito 
\overline{L}_u, \gamma : \overline{L}_u \pto \overline{R}_u)$. If for every $u 
\in U$ every edge $e \in \overline{R}_u$ without preimage in $R$ is connected 
to a node $v \in \overline{R}_u$ without preimage in $R$, then 
$\instBound[\rho]{G} = |V_G|$.}
\begin{lemma}\label{lem:better-bound}
\lemmaBetterBound
\end{lemma}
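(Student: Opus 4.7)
The plan is to refine Proposition~\ref{prop:inst-bound-exists} by splitting instantiation steps into two disjoint classes using the hypothesis. Call a single application $(id_L,r)\instAdd u$ \emph{right-extending} if it adds at least one node to $\overline{R}_u$ without preimage in $R$, and \emph{right-trivial} otherwise; read contrapositively, the hypothesis says that a right-trivial step also adds no new edges to $\overline{R}_u$, so it leaves $\overline{R}_u$ isomorphic to $R$. Since $\iota \instAdd u$ is built as a pushout on top of $(\pi,\gamma)$, this classification depends only on $u$ (and $\rho$) and carries over when $u$ is applied in the middle of a longer instantiation: every multi-step instantiation splits into right-extending applications, each contributing a fresh copy of $\overline{R}_u \setminus R$ to $\overline{R}$, and right-trivial applications, each contributing nothing to $\overline{R}$.

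The first main step is to bound the number of right-extending applications. In a backward step computed by Procedure~\ref{procedure:back-step}, the co-match $m : \overline{R} \ito G$ is total and injective, so each fresh node introduced by a right-extending application maps to a distinct node of $G$. Hence any instantiation surviving Step~\ref{backstep:calculate-pocs} can contain at most $|V_G|$ right-extending applications. The second, harder step is to eliminate right-trivial applications from consideration: given an instantiation $\iota$ of length exceeding $|V_G|$, at least one of its applications must be right-trivial; let $\iota'$ be the instantiation obtained by deleting that one occurrence (well defined by the order-independence remark after Definition~\ref{def:uqrule-instantiation}). Because $\overline{R}$ is unaffected, the same co-match $m$ applies, while $\overline{L}'$ is $\overline{L}$ with the copy of $L_u \setminus p_u(L)$ contributed by the removed step stripped off. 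This extra material is in the kernel of $\gamma$ (by right-triviality it has no image in $\overline{R}$), so the minimal pushout complement $G'_0$ of $m,\gamma'$ differs from that of $m,\gamma$ only by the absence of the fresh nodes and edges that would have been created from the removed copy, giving $G'_0 \subOrder G'$.

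The remaining obligation, which I expect to be the main technical obstacle, is to verify that $\iota'$ still passes the application-condition check of Step~\ref{backstep:calculate-pocs} whenever $\iota$ does, so that $G'_0$ is actually produced by the procedure and subgraph-covers $G'$. For any quantified node $x$, every edge of $G'_0$ incident to $m'_0(\pi(x))$ is either inherited from $G$---in which case its preimage in $\overline{R}$ via $m$ travels back along $\gamma$ to a preimage in $\overline{L}$ that, not lying in the removed copy, is already present in $\overline{L}'$---or freshly created from $\overline{L}'$, in which case it has a preimage there by construction. Iterating the removal therefore yields, for any $\iota$ of length strictly larger than $|V_G|$, a shorter instantiation whose backward result subgraph-covers that of $\iota$, which is exactly what Proposition~\ref{prop:inst-bound-exists} requires. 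Hence $\instBound[\rho]{G} = |V_G|$ suffices.
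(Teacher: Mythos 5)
Your counting step has a genuine gap. You bound the number of right-extending applications by $|V_G|$ on the grounds that ``the co-match $m : \overline{R} \ito G$ is total and injective, so each fresh node introduced by a right-extending application maps to a distinct node of $G$.'' But in Procedure~\ref{procedure:back-step} the total injective co-match is found only \emph{after} Step~\ref{backstep:concat-subgraphs} precomposes $\gamma$ with an arbitrary subgraph morphism $\mu : \overline{R} \subArrow R'$; the effective map $\overline{R} \pto G$ is $m' \circ \mu$, which is partial. Fresh right-hand nodes may simply be deleted by $\mu$ instead of being injected into $G$, so an instantiation of length far exceeding $|V_G|$ can consist \emph{entirely} of right-extending steps (take the release-all-forks rule, a long instantiation, and a $\mu$ that erases most of the added $F$-edges and nodes). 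Since your removal argument applies only to right-trivial steps, and such an instantiation contains none, your reduction never fires and the claimed bound is not established.

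The paper's proof classifies steps differently, by their interaction with $\nu = \mu$ rather than intrinsically: a step all of whose created right-hand elements are deleted by $\nu$ is removable by exactly the machinery of Proposition~\ref{prop:inst-bound-exists} (reorder it to the end via Lemma~\ref{lem:rule-instantiation-swap}, factor $\nu = \nu' \circ \mu_u''$, apply Lemma~\ref{lem:subgraph-split-rule-minimization}); a step with a surviving created element must, by the hypothesis of the lemma together with the fact that a morphism defined on an edge is defined on its incident nodes, have a surviving created \emph{node}, and only these nodes land injectively in $G$, bounding the number of non-removable steps by $|V_G|$. Your right-trivial removal is a special case of the removable case (where $\mu_u''$ is even an isomorphism), and your discussion of the application condition is sensible, but without the $\nu$-relative classification the argument does not go through. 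To repair it, replace ``right-extending/right-trivial'' by ``some created right-hand element survives $\nu$ / all are deleted by $\nu$'' and reuse the Proposition~\ref{prop:inst-bound-exists} reduction for the latter class.
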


\subsubsection*{Optimization by Preparation}
The general framework in \cite{wsts-gts-framework} uses a preparation step in 
the backward search to compute the concatenation of rules and subgraph 
morphisms performed in Step~\ref{backstep:concat-subgraphs} of 
Procedure~\ref{procedure:back-step}. This is not fully possible with 
universally quantified rules since the instantiations are generated within the 
backward steps. However, the preparation step can be performed for rules 
without universal quantifications. For rules with quantification the inner rule 
morphism can be concatenated with subgraph morphisms to partially prepare the 
rule. It can also be show that any concatenation of an instantiation and a 
subgraph morphism which is also a subgraph morphism, will not yield new graph 
in the backward step and thus can be dropped. This also holds for rules with 
universal quantification if all possible instantiations are also subgraph 
morphisms.

\section{Conclusions and Related Work}
In this paper we introduced a categorical formalization for an extension of 
graph transformation systems with universally quantified rules built on the 
single pushout approach. These rules are powerful  enough to model distributed 
algorithms which use broadcast communication. A similar concept are adaptive 
star grammars \cite{dhjme:adaptive-star-grammars} where the left-hand side of a 
rule is a star, i.e.~a designated center node connected to a set of other 
nodes. Arbitrary large graphs can be matched by cloning parts of the star, 
which is -- apart of the restriction to stars -- one of the main differences to 
our approach. Technically our instantiations are a special form of amalgamated 
graph transformations \cite{Boehm1987377}, a technique to merge rules.

The backward search procedure presented in this paper is an extension of 
\cite{wsts-gts-framework} with universally quantified rules and can be used for 
the verification of distributed algorithms, similar to 
\cite{dsz:verification-ad-hoc-networks}. There the induced subgraph ordering 
was used, which was also shown to be compatible with the framework in 
\cite{wsts-gts-framework}. However, our quantifications differ as we have a 
stronger negative application condition such that the induced subgraph ordering 
is not enough to cause our UGTS to satisfy the compatibility condition. This 
also causes the approached to differ in expressiveness. In general our approach 
should be compatible with the induced subgraph ordering and the minor ordering, 
but we did not yet investigated this.

Parameterized verification of combinations of automata- and graph-based models 
of distributed systems has been studied, e.g.~in 
\cite{DDGLZ09,ADR11,DSZ10,DSZ11,DSTZ12,DST13}. In \cite{AHDR08} we applied graph-based 
transformations to model intermediate evaluations of non-atomic mutual 
exclusion protocols with universally quantified conditions. The conditions are 
not defined however in terms of graph rewrite rules. Semi-decision procedures 
can be defined by resorting to upward closed abstractions during backward 
search (monotonic abstraction as in \cite{DR12}). In \cite{DDGLZ09} we 
studied decidability of reachability and coverability for a graph-based 
specification used to model biological systems. Among other results, we proved 
undecidability for coverability for graph rewrite systems that can only 
increase the size of a configuration. Reachability problems for graph-based 
representations of protocols have also been considered in \cite{ADR11} where 
symbolic representations combining a special graph ordering and 
constraint-based representation of relations between local data of different 
nodes have been used to verify parameterized consistency protocols. 
Coverability for GTS is studied in \cite{RTA12} where it was proved that it is 
decidable for bounded path graphs ordered via subgraph inclusion. A model with 
topologies represented as acyclic directed graphs has been presented in 
\cite{AAR13}. Coverability for automata-based models of broadcast communication 
has recently been studied in \cite{DSZ10,DSZ11,DSTZ12,DT13,DST13}. 
In the context of program analysis approximated backward search working on graphs 
representing data structures with pointers have been considered in \cite{ACV13}.
In this setting approximations are defined via edges or node deletion.

\bibliography{broadcast-subgraph}
\bibliographystyle{plain}

\opt{long}{

\newpage
\appendix

\section{Important Properties of Pushouts}
Pushouts and pushout complements are a well-known notion in category theory and 
are the basis for the single pushout approach as well as the double pushout 
approach \cite{handbook-spo}. We briefly recall the definition of a pushout and 
a few important properties of such.

\noindent
\parbox{0.7\textwidth}{
\begin{definition}\label{def:pushout}
  Let $\phi\colon G_0\pto G_1$ and $\psi\colon G_0\pto G_2$ be two
  partial graph morphisms. The \emph{pushout} of $\phi$ and $\psi$
  consists of a graph $G_3$ and two morphisms $\psi'\colon G_1\pto
  G_3$, $\phi'\colon G_2\pto G_3$ such that $\psi'\circ\phi =
  \phi'\circ\psi$ and for every other pair of morphisms $\psi''\colon
  G_1\pto G'_3$, $\phi''\colon G_2\pto G'_3$ such that
  $\psi''\circ\phi = \phi''\circ\psi$ there exists a unique morphism
  $\eta\colon G_3\pto G'_3$ with $\eta\circ\psi' = \psi''$ and
  $\eta\circ\phi' = \phi''$.
\end{definition}}
\parbox{0.3\textwidth}{
\begin{center}
  \scalebox{0.8}{\begin{tikzpicture}[x=1.5cm,y=-1.5cm]

\node (G0) at (0,0) {$G_0$};
\node (G1) at (1,0) {$G_1$};
\node (G2) at (0,1) {$G_2$};
\node (G3) at (1,1) {$G_3$};
\node (G3b) at (2,2) {$G_3'$};
\draw[mor-parl] (G0) -- node [midway, above] {$\phi$} (G1);
\draw[mor-parl] (G0) -- node [midway, left] {$\psi$} (G2);
\draw[mor-parl] (G2) -- node [midway, above] {$\phi'$} (G3);
\draw[mor-parl] (G1) -- node [midway, left] {$\psi'$} (G3);
\draw[mor-parl] (G3) -- node [midway, above right] {$\eta$} (G3b);
\draw[mor-parl] (G1) to[bend left] node [midway, above right] {$\psi''$} (G3b);
\draw[mor-parl] (G2) to[bend right] node [midway, below left] {$\phi''$} (G3b);

\end{tikzpicture}
\end{center}}


Without proof we use the following properties of pushouts in our proofs in 
Appendix~\ref{app:proofs}.

\begin{lemma}\label{lem:preserve-injection}
Let $\phi : G_1 \pto G_2$ and $\psi : G_1 \ito G_3$ be morphisms and let $\phi' 
: G_3 \pto G_4$, $\psi' : G_2 \ito G_4$ be their pushout. If $\psi$ is total 
and injective, then $\psi'$ is also total and injective.
\end{lemma}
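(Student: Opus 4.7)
The plan is to work directly with the explicit construction of the pushout via equivalence classes given in the preliminaries. Write $\equiv_V$ and $\equiv_E$ for the smallest equivalences on $V_{G_2} \sqcup V_{G_3}$ and $E_{G_2} \sqcup E_{G_3}$ generated by $\phi(z) \equiv \psi(z)$ whenever both sides are defined. Then $\psi'$ sends $x \in G_2$ to $[x]$ precisely when $[x]$ is valid.

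The first step is to characterise the class $[x]$ of an element $x \in V_{G_2}$. Any chain of the defining relations starting at $x$ has the form $x = \phi(z_1),\ \psi(z_1) = \psi(z_2),\ \phi(z_2) = \phi(z_3),\ldots$ Because $\psi$ is injective, the step $\psi(z_1) = \psi(z_2)$ forces $z_1 = z_2$, so every chain collapses after a single $\phi/\psi$ identification. Hence
\[
  [x] \;=\; \{x\} \cup \{\psi(z) : z \in V_{G_1},\ \phi(z) = x\} \;\subseteq\; \{x\} \sqcup V_{G_3}.
\]
An entirely analogous analysis for edges gives $[e] \subseteq \{e\} \sqcup E_{G_3}$ for each $e \in E_{G_2}$.

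With this characterisation, totality of $\psi'$ is immediate. A class is invalid only if it contains the defined image of some $z \in G_1$ on which the other morphism is undefined. Totality of $\psi$ rules out $\psi(z)$ being undefined, so the only potential offenders are $z$ with $\phi(z)$ undefined, contributing the class of $\psi(z) \in G_3$. But $\psi(z)$ cannot lie in $[x]$ for any $x \in V_{G_2}$ by the characterisation above, since membership would require $\phi(z) = x$ to be defined. For an edge class $[e]$ with $e \in E_{G_2}$, the same argument shows $[e]$ is not directly invalidated; moreover the nodes incident to $[e]$ in $G_4$ are exactly the node classes $[v]$ for $v$ incident to $e$ in $G_2$, each valid by the node case.

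Injectivity of $\psi'$ drops out of the same characterisation: if $x, y \in V_{G_2}$ (or $E_{G_2}$) satisfy $[x] = [y]$, then $y \in [x]$, and since $[x] \setminus \{x\} \subseteq G_3$ while $y \in G_2$, we conclude $y = x$.

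The one subtlety I expect to check carefully is that, for an edge $e \in E_{G_2}$ equivalent to some $e' \in E_{G_3}$ via $z \in E_{G_1}$ with $\phi(z) = e$, $\psi(z) = e'$, no incident node of $e'$ in $G_3$ can carry an invalid class into $[e]$. This works because the morphism axioms give $\phi_V(c_{G_1}(z)) = c_{G_2}(e)$ and $\psi_V(c_{G_1}(z)) = c_{G_3}(e')$, so each incident node of $e'$ is $\equiv_V$-equivalent to a node of $G_2$ and is therefore governed by the node case already handled. This is the main place where one must keep track of partiality versus totality carefully; everything else reduces to elementary equivalence-class bookkeeping.
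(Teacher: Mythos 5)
Your proof is correct, but note that the paper itself offers no proof to compare against: Lemma~\ref{lem:preserve-injection} is listed in the appendix among the properties of pushouts that are used \emph{without proof}, as standard facts about the category of graphs and partial morphisms. What you have supplied is therefore a self-contained elementary verification via the explicit equivalence-class construction from the preliminaries, and it holds up. The key observation --- that injectivity of $\psi$ collapses every generating chain after one step, so that $[x] = \{x\} \cup \{\psi(z) : \phi(z) = x\}$ meets $G_2$ in the single element $x$ --- simultaneously yields injectivity (no two elements of $G_2$ share a class) and validity (any $\psi(z)$ in $[x]$ forces $\phi(z) = x$ to be defined, and totality of $\psi$ kills the other failure mode). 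You were also right to single out the edge case as the delicate point: the paper's validity condition for edge classes additionally requires all incident node classes to be valid, and your reduction of those incident classes to node classes of the form $[v]$ with $v \in c_{G_2}(e)$, via the morphism axiom $\phi_V(c_{G_1}(z)) = c_{G_2}(\phi_E(z))$, is exactly what is needed to close that gap. The only cosmetic caveat is that your argument is tied to the concrete construction of the pushout given in Section~2 rather than to the universal property; since the paper asserts that this construction computes the (unique up to isomorphism) pushout, that is a legitimate basis for the proof.
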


\begin{lemma}\label{lem:pushout-stacking}
Let morphisms as shown in the diagram below be given. It can be shown that the 
following two properties hold for any category.
\begin{enumerate}
  \item If the left and the right squares below are pushouts, the outer square 
  is a pushout as well.
  \item If the left square and the outer square are pushouts, the right square 
  is a pushout as well.
\end{enumerate}
\begin{center}
  \scalebox{1.0}{\begin{tikzpicture}[x=1.5cm,y=-1.5cm]

\node (G1) at (0,0) {$G_1$};
\node (G2) at (1,0) {$G_2$};
\node (G3) at (2,0) {$G_3$};
\node (G4) at (0,1) {$G_4$};
\node (G5) at (1,1) {$G_5$};
\node (G6) at (2,1) {$G_6$};
\draw[mor-parl] (G1) -- (G2);
\draw[mor-parl] (G2) -- (G3);
\draw[mor-parl] (G1) -- (G4);
\draw[mor-parl] (G2) -- (G5);
\draw[mor-parl] (G3) -- (G6);
\draw[mor-parl] (G4) -- (G5);
\draw[mor-parl] (G5) -- (G6);

\end{tikzpicture}
\end{center}
\end{lemma}

\section{Proofs}\label{app:proofs}

\begin{lemma}\label{lem:rule-instantiation-swap}
Let $\rho = (r,U)$ be a rule and let $f : U \to \nat_0$ be any function 
assigning a quantity to each universal quantification. Every instantiation 
of $\rho$ which is generated by using $f(u)$ occurrences for each $u$ 
respectively, yields the same morphisms (up to isomorphism).
\end{lemma}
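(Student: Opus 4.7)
The plan is to proceed by induction on the total number of instantiation steps $n = \sum_{u \in U} f(u)$ and reduce the statement to a local \emph{swap lemma}: for any instantiation $\iota$ of $\rho$ and any two quantifications $u_1, u_2 \in U$, the instantiations $(\iota \instAdd u_1) \instAdd u_2$ and $(\iota \instAdd u_2) \instAdd u_1$ agree up to isomorphism. Any two sequences of length $n$ that realise the same $f$ differ by a permutation of positions, and every permutation decomposes into finitely many adjacent transpositions, so the swap lemma together with the inductive hypothesis will give the full claim.

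For the swap lemma I would unfold Definition~\ref{def:uqrule-instantiation} twice and work at the level of universal properties. Writing $u_i = (p_{u_i} : L \ito L_{u_i}, q_{u_i} : L_{u_i} \pto R_{u_i})$ and $\iota = (\pi : L \ito \overline{L}, \gamma : \overline{L} \pto \overline{R})$, the left-hand graph $\overline{L}_{12}$ is constructed by the pushout of $\pi, p_{u_1}$ (yielding an intermediate $\overline{L}_1$) followed by the pushout of $p_{u_1}' \circ \pi$ and $p_{u_2}$. A diagram chase chaining these two pushout universal properties (a variant of Lemma~\ref{lem:pushout-stacking}) shows that $\overline{L}_{12}$ is the colimit of the wide span centred on $L$ with arms $\pi : L \to \overline{L}$, $p_{u_1} : L \to L_{u_1}$ and $p_{u_2} : L \to L_{u_2}$. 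This description is manifestly symmetric in $u_1, u_2$, so $\overline{L}_{21}$ satisfies the same universal property and uniqueness of colimits yields $\overline{L}_{12} \cong \overline{L}_{21}$. The analogous argument on the right side, with arms $\gamma \circ \pi$, $q_{u_1} \circ p_{u_1}$ and $q_{u_2} \circ p_{u_2}$, produces $\overline{R}_{12} \cong \overline{R}_{21}$.

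To finish I would check that these two isomorphisms are compatible with the morphism components of the instantiations, in particular with the composed maps $L \to \overline{L}_{ij}$ and the mediating morphisms $\eta_{12}, \eta_{21}$. Since each $\eta$ is the unique morphism making a certain pushout diagram commute, and the constructed isomorphisms on the L- and R-sides respect the relevant compositions with $\pi, \gamma, p_{u_i}$ and $q_{u_i}$, the uniqueness clause forces $\eta_{12}$ to be transported to $\eta_{21}$. The main obstacle I expect is controlling the partiality carefully: because $\gamma$ and the $q_{u_i}$ are partial morphisms, the right-side pushouts are computed via the equivalence-class construction of Section~\ref{sec:preliminaries} with its additional ``validity'' clause, and I must confirm that the wide-colimit interpretation continues to hold in the category of partial graph morphisms used throughout the paper. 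Once that foundational point is in hand, the rest of the argument is essentially abstract nonsense about uniqueness of colimits.
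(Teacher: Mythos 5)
Your proposal is correct and follows essentially the same route as the paper: reduce to swapping two adjacent instantiation steps, then show via pushout pasting (Lemma~\ref{lem:pushout-stacking}) that both orders compute the same object --- your ``colimit of the wide span centred on $L$'' is exactly the paper's double pushout $\overline{L}_{uv}$ of $p_u'$, $p_v'$ --- and conclude by uniqueness of the mediating morphism $\eta$. The partiality worry you flag is handled by the paper's setup, since pushouts exist and are unique in the category of partial graph morphisms and the pasting lemma is stated there for arbitrary categories.
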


\begin{proof}
We show this property by showing that we can swap each two instantiation steps 
without changing the instantiation containing both steps.
Let $\iota = (\pi, \gamma)$ be an instantiation of some rule $\rho = (r, 
U)$ and let $u = (p_u, q_u),  v = (p_v, q_v) \in U$ be two universal 
quantifications as shown in the diagram below. There the upper part of the 
diagram is the instantiation $\iota_v = \iota \instAdd v$, while the front 
part of the diagram is the instantiation $\iota_u = \iota \instAdd u$.

\gcom{It seems to me that the part of the diagram needed in the proof is much smaller
than the diagram displayed below.
We basically use the "western" inner diagrams pointing to $\overline{L}_{uv}$
and the "western" outer diagrams pointing to $\overline{R}_{uv}$
We could drop or write in grey the remaining parts to highlight the morphisms
on which we want to show permutation
Apart from this...it is nice to show permutation using commutation!
The proof is ok to me.}
\begin{center}
  \scalebox{1.0}{\begin{tikzpicture}[x=1.7cm,y=-1.7cm]

\begin{scope}
  \node (L) at (0,0) {$L$};
  \node (oL) at (1,0) {$\overline{L}$};
  \node (oR) at (2,0) {$\overline{R}$};
  \node (Li) at (0,1) {$L_u$};
  \node (oLi) at (1,1) {$\overline{L}_u$};
  \node (Ri) at (0,2) {$R_u$};
  \node (oRi) at (2,2) {$\overline{R}_u$};
  
  \draw[mor-tot-inj] (L) -- node [morlabel, above] {$\pi$} (oL);
  \draw[mor-parl] (oL) -- node [morlabel, above] {$\gamma$} (oR);
  \draw[mor-tot-inj] (L) -- node [morlabel, left] {$p_u$} (Li);
  \draw[mor-tot-inj] (oL) -- node [morlabel, left] {$p_u'$} (oLi);
  \draw[mor-tot-inj] (Li) -- node [morlabel, above] {$\pi_u'$} (oLi);
  \draw[mor-parl] (Li) -- node [morlabel, left] {$q_u$} (Ri);
  \draw[mor-parl] (oLi) -- node [morlabel, above right] {$\eta_u$} (oRi);
  \draw[mor-tot-inj] (oR) -- node [morlabel, left] {$p_u''$} (oRi);
  \draw[mor-parl] (Ri) -- node [morlabel, above] {$\pi_u''$} (oRi);
\end{scope}

\begin{scope}[shift={(0.7,-0.65)}]
  \node (Lj) at (0,0) {$L_v$};
  \node (oLj) at (1,0) {$\overline{L}_v$};
  \node (oLij) at (1,1) {$\overline{L}_{uv}$};
  \draw[mor-tot-inj] (L) -- node [morlabel, above left] {$p_v$} (Lj);
  \draw[mor-tot-inj] (oL) -- node [morlabel, left, inner sep=7pt] {$p_v'$} 
  (oLj);
  \draw[mor-tot-inj, dashed] (oLi) -- node [morlabel, above, inner sep=8pt] 
  {$\pi_{uv}'$} (oLij);
  \draw[mor-tot-inj] (Lj) -- node [morlabel, above] {$\pi_v'$} (oLj);
  \draw[mor-tot-inj, dashed] (oLj) -- node [morlabel, above right] 
  {$\pi_{vu}'$} (oLij);
\end{scope}

\begin{scope}[shift={(1.4,-1.3)}]
  \node (Rj) at (0,0) {$R_v$};
  \node (oRj) at (2,0) {$\overline{R}_v$};
  \node (oRij) at (2,2) {$\overline{R}_{uv}$};
  \draw[mor-parl] (Lj) -- node [morlabel, above left] {$q_v$} (Rj);
  \draw[mor-parl] (oLj) -- node [morlabel, above left] {$\eta_v$} (oRj);
  \draw[mor-tot-inj] (oR) -- node [morlabel, below right] {$p_v''$} (oRj);
  \draw[mor-parl, dashed] (oLij) -- node [morlabel, above right] {$\eta_{uv}$} 
  (oRij);
  \draw[mor-tot-inj, dashed] (oRi) -- node [morlabel, below right] 
  {$\pi_{uv}''$} (oRij);
  \draw[mor-parl] (Rj) -- node [morlabel, above] {$\pi_v''$} (oRj);
  \draw[mor-tot-inj, dashed] (oRj) -- node [morlabel, right] {$\pi_{vu}''$} 
  (oRij);
\end{scope}

\end{tikzpicture}
\end{center}

Let $\overline{L}_{uv}$ be the pushout of $p_u'$, $p_v'$ and let 
$\overline{R}_{uv}$ be the pushout of $p_u''$, $p_v''$. By the properties 
of pushouts a unique $\eta_{uv}$ exists and we will show that $\iota_u 
\instAdd 
v = (\pi_{vu}' \circ \pi_v' \circ p_v, \eta_{uv}) = \iota_v \instAdd u$.

By construction all squares $p_u' \circ \pi = \pi_u' \circ p_u$, 
$p_v' \circ \pi = \pi_v' \circ p_v$ and $\pi_{vu}' \circ p_v' = \pi_{uv}' 
\circ 
p_u'$ are pushouts. Therefore, the squares $\pi_{vu}' \circ \pi_v' \circ p_v = 
\pi_{uv}' \circ p_u' \circ \pi$ and $\pi_{vu}' \circ p_v' \circ \pi = 
\pi_{uv}' 
\circ \pi_u' \circ p_u$ are pushouts as well. Thus, $\overline{L}_{uv}$ is the 
pushout of $p_v$, $p_u' \circ \pi$ computed in the construction of $\iota_u 
\instAdd v$ as well as the pushout of $p_u$, $p_v' \circ \pi$ computed in the 
construction of $\iota_v \instAdd u$.
The same property holds for $\overline{R}_{uv}$ using the three large outer 
squares. Since $\eta_{uv}$ is unique, both sequences of the instantiation 
steps 
give rise to the same morphisms.
This means that every instantiation can be uniquely characterized only by the 
number on instantiation steps for each $u \in U$. \qed
\end{proof}

\begin{lemma}\label{lem:large2small-rule-subgraph-existence}
\extranote{If $q_u$ can map quantified nodes to non-quantified nodes, $\mu_u'$ 
and $\mu_u''$ still exist, but do not commute.}
Let $\rho = (r,U)$ be a rule and let $(\pi : L \to \overline{L}, \gamma : 
\overline{L} \pto \overline{R})$ be an instantiation of $\rho$. For every 
further instantiation $(\pi, \gamma) \instAdd u$ using some $u \in U$, there 
are two subgraph morphisms $\mu_u' : \overline{L}_u \subArrow \overline{L}$ 
and 
$\mu_u'' : \overline{R}_u \subArrow \overline{R}$ such that $\gamma \circ 
\mu_u' = \mu_u'' \circ \eta$.
\end{lemma}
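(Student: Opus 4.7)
The plan is to construct $\mu_u'$ and $\mu_u''$ by invoking the universal properties of the two pushouts defining $\overline{L}_u$ and $\overline{R}_u$, choosing in each case a cocone whose apex is $\overline{L}$ or $\overline{R}$ respectively, so that the induced morphism is a ``projection'' that deletes the extra elements introduced by the instantiation step. The enabling observation is that both $p_u : L \ito L_u$ and (by the uniqueness-of-preimage hypothesis in Definition~\ref{def:uqrule}) $q_u \circ p_u : L \ito R_u$ are total and injective; hence they admit partial left-inverses $p_u^{-1} : L_u \pto L$ and $(q_u \circ p_u)^{-1} : R_u \pto L$ defined on their respective images. Throughout I write $p_u', \pi_u'$ for the pushout legs opposite $\pi, p_u$ in $\overline{L}_u$, and analogously $p_u'', \pi_u''$ in $\overline{R}_u$.

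First I would verify that $(\mathit{id}_{\overline{L}},\ \pi \circ p_u^{-1})$ is a cocone over the span $\pi, p_u$, since $\pi \circ p_u^{-1} \circ p_u = \pi$. The universal property of $\overline{L}_u$ then yields a unique $\mu_u' : \overline{L}_u \pto \overline{L}$ with $\mu_u' \circ p_u' = \mathit{id}_{\overline{L}}$ and $\mu_u' \circ \pi_u' = \pi \circ p_u^{-1}$. Surjectivity of $\mu_u'$ is immediate from the first equation; for injectivity I invoke Lemma~\ref{lem:preserve-injection} to conclude that $p_u'$ is total injective, then inspect the pushout to see that $\mu_u'$ is defined exactly on $p_u'(\overline{L})$ (the ``new'' elements coming from $L_u \setminus p_u(L)$ lie outside its domain), where it acts as the inverse of $p_u'$. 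Hence $\mu_u'$ is a subgraph morphism. The construction of $\mu_u''$ is entirely analogous, using the cocone $(\mathit{id}_{\overline{R}},\ \gamma \circ \pi \circ (q_u \circ p_u)^{-1})$ and applying Lemma~\ref{lem:preserve-injection} to the total injective $q_u \circ p_u$ to get that $p_u''$, and therefore $\mu_u''$, is injective.

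Finally, for the commutation $\gamma \circ \mu_u' = \mu_u'' \circ \eta$, I would again exploit the uniqueness part of the universal property of the pushout $\overline{L}_u$: both sides are morphisms $\overline{L}_u \pto \overline{R}$, so it suffices to check agreement after precomposition with $p_u'$ and with $\pi_u'$. Precomposition with $p_u'$ yields $\gamma$ on both sides, directly on the left and via $\mu_u'' \circ \eta \circ p_u' = \mu_u'' \circ p_u'' \circ \gamma = \gamma$ on the right, using the mediating property of $\eta$ and $\mu_u'' \circ p_u'' = \mathit{id}_{\overline{R}}$. Precomposition with $\pi_u'$ should give $\gamma \circ \pi \circ p_u^{-1}$, which reduces on the right to the identity $(q_u \circ p_u)^{-1} \circ q_u = p_u^{-1}$ as partial morphisms $L_u \pto L$. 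This identity is the main obstacle, and is exactly where the hypothesis that $q_u(p_u(x))$ has a unique preimage in $L_u$ is crucial: it forces $q_u(y) = q_u(p_u(x))$ to imply $y = p_u(x)$, making the two composites coincide wherever defined. Partiality of $\gamma$ poses no additional difficulty, since both sides collapse consistently wherever $\gamma \circ \pi$ is undefined. As the extranote in the source confirms, dropping the uniqueness-of-preimage hypothesis would leave the two subgraph morphisms in place but break precisely this commutation.
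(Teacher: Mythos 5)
Your proof is correct, and for the heart of the lemma --- the commutation $\gamma \circ \mu_u' = \mu_u'' \circ \eta$ --- it takes a genuinely different and cleaner route than the paper. The paper obtains the same two morphisms directly as the ``reverses'' of the total injective pushout legs $p_u'$ and $q_u'' \circ p_u''$ (your $p_u''$), which is exactly what your mediating-morphism construction yields once you note the mediator is defined precisely on the image of that leg; but it then proves commutativity by decomposing $\overline{R}_u$ into a grid of intermediate pushouts (via the pushout-stacking lemma) and running a three-way element case analysis on $\overline{L}_u$, the delicate case being an $x$ outside the image of $p_u'$ whose $\eta$-image might acquire a spurious preimage in $\overline{R}$. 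You replace all of that with the joint epicness of the legs of the pushout $\overline{L}_u$, reducing the whole commutation to the single partial-morphism identity $(q_u \circ p_u)^{-1} \circ q_u = p_u^{-1}$, and you locate the role of the uniqueness-of-preimage clause of Definition~\ref{def:uqrule} exactly where the paper needs it: it forces $(q_u\circ p_u)^{-1}(q_u(y))$ to be undefined for $y \in L_u$ outside the image of $p_u$, i.e.\ it makes the two composites have the same \emph{domain}, not merely agree where both are defined (your phrase ``coincide wherever defined'' undersells this, since equality of partial morphisms requires equal domains). One detail you should not dismiss as ``entirely analogous'': showing $\mu_u''$ is undefined outside the image of your $p_u''$ is slightly harder than for $\mu_u'$, because the span defining $\overline{R}_u$ contains the \emph{partial} morphism $\gamma\circ\pi$; you must argue that any class of $\overline{R}_u$ reached only from $R_u$ comes from an element with no preimage under $q_u\circ p_u$ (otherwise the class would be invalid or would already contain an element of $\overline{R}$). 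This is a short argument and it goes through, but it is needed for injectivity of $\mu_u''$. Overall, your approach isolates the one algebraic fact the hypothesis is for, at the price of leaning on the exact (not merely up-to-restriction) mediating-morphism equations; the paper's element chase avoids that reliance at the cost of substantially more bookkeeping.
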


\begin{proof}
\gcom{
This proofs seems too complicate for what is proved.\\
Shall we try to simplify it?
}
By definition $p_u$ and $q_u \circ p_u$ are total and injective, 
thus, $p_u'$ and $q_u'' \circ p_u''$ are total and injective as 
well (see Lemma~\ref{lem:preserve-injection}). Hence, the reverse morphisms 
$\mu_u'$ and $\mu_u''$ are partial, injective and surjective, i.e.~subgraph 
morphisms. By using Lemma~\ref{lem:pushout-stacking} it can be shown that by 
forming the pushout $\overline{R}'$ of $p_u'$ and $\gamma$, the pushout 
$R_u'$ of $q_u$ and $\pi'$ and then the pushout $\overline{R}_u$ of $q_u'$ 
and $\gamma'$, we obtain the same graph $\overline{R}_u$ as by forming the 
pushout of $q_u \circ p_u$ and $\gamma \circ \pi$ directly. Furthermore, 
the diagram below commutes with the exception of $\mu_u'$ and $\mu_u''$, for 
which we still have to show the commutativity with $\eta$ and $\gamma$.
\begin{center}
  \scalebox{1.0}{\begin{tikzpicture}[x=1.5cm,y=-1.5cm]

\node (L) at (0,0) {$L$};
\node (oL) at (1,0) {$\overline{L}$};
\node (oR) at (2,0) {$\overline{R}$};
\node (Li) at (0,1) {$L_u$};
\node (oLi) at (1,1) {$\overline{L}_u$};
\node (Ri2) at (1,2) {$R_u'$};
\node (Ri) at (0,2) {$R_u$};
\node (oR2) at (2,1) {$\overline{R}'$};
\node (oRi) at (2,2) {$\overline{R}_u$};

\draw[mor-tot-inj] (L) -- node [morlabel, above] {$\pi$} (oL);
\draw[mor-parl] (oL) -- node [morlabel, above] {$\gamma$} (oR);
\draw[mor-tot-inj] (L) -- node [morlabel, left] {$p_u$} (Li);
\draw[mor-tot-inj] (oL) -- node [morlabel, left] {$p_u'$} (oLi);
\draw[mor-tot-inj] (Li) -- node [morlabel, above] {$\pi'$} (oLi);
\draw[mor-parl] (Li) -- node [morlabel, left] {$q_u$} (Ri);
\draw[mor-parl] (oLi) -- node [morlabel, above right] {$\eta$} (oRi);
\draw[mor-tot-inj] (oR) -- node [morlabel, right] {$p_u''$} (oR2);
\draw[mor-tot-inj] (Ri) -- node [morlabel, above] {$\pi''$} (Ri2);
\draw[mor-parl] (oLi) -- node [morlabel, above] {$\gamma'$} (oR2);
\draw[mor-parl] (oLi) -- node [morlabel, left] {$q_u'$} (Ri2);
\draw[mor-parl] (Ri2) -- node [morlabel, above] {$\gamma''$} (oRi);
\draw[mor-parl] (oR2) -- node [morlabel, right] {$q_u''$} (oRi);
\draw[mor-subgraph] (oLi) to[bend right] node [morlabel, right] {$\mu_u'$} (oL);
\draw[mor-subgraph] (oRi) to[bend right=50] node [morlabel, right] {$\mu_u''$} 
(oR);

\end{tikzpicture}
\end{center}

Let $x \in \overline{L}_u$ and assume $\gamma(\mu_u'(x))$ is defined. This 
means that there is exactly one $x' \in \overline{L}$ with $p_u'(x') = x$ and 
$\gamma(x')$ is defined. Since $q_u'' \circ p_u''$ is total and injective, 
there is an $x'' \in \overline{R}_u$ with $q_u''(p_u''(\gamma(x'))) = x''$. 
Due to commutativity of the diagram we obtain $\eta(x) = x''$. Hence, we know 
that $\mu_u''(\eta(x))$ is defined and $\gamma(\mu_u'(x)) = \mu_u''(\eta(x))$.

Now assume $\mu_u'(x)$ is defined, but $\gamma(\mu_u'(x))$ is undefined. 
Because of commutativity, $\gamma'(x)$ is undefined as well and therefore also 
$\eta(x) = q_u''(\gamma'(x))$ is undefined. Thus, $\gamma(\mu_u'(x)) = 
\mu_u''(\eta(x))$ are both undefined.

Now assume $\mu_u'(x)$ is undefined. If $\eta(x)$ is undefined, 
$\gamma(\mu_u'(x)) = \mu_u''(\eta(x))$ are obviously both undefined, so we 
assume that $\eta(x)$ is defined and show that $\eta(x)$ has no preimages under 
$q_u'' \circ p_u''$. For this we only have to consider elements in 
$\overline{R}'$ which have preimages in $\overline{L}_u$, since an element 
without a preimage and mapped to $\eta(x)$ would violate the pushout property 
of the lower right square.
We observe that $\gamma'(x)$ has no preimage in $\overline{R}$, since the top 
right square would not be a pushout. In fact this holds for every $x' \in 
\overline{L}_u$ with $\eta(x') = \eta(x)$ if $x'$ has no preimage in 
$\overline{L}$. By the same argument we also know that $\gamma'(x')$ has 
exactly one preimage in $\overline{L}_u$. This means that two $x'$ with and 
without preimage in $\overline{L}$ are not merged by $\gamma'$. By showing that 
these $x'$ are also not merged by $q_u'$, we know that their image in the 
pushout $\overline{R}_u$ would not be equal and prove that there are in fact no 
$x'$ with preimage in $\overline{L}$.

If $x'$ has a preimage in $\overline{L}$ but not in $L_u$, then $x'$ is not 
merged with any other element by $q_u'$, since the left lower square is a 
pushout. If $x'$ has a preimage in $\overline{L}$ and $L_u$, it also has 
(exactly) one preimage in $L$, because of the top left square being a pushout. 
Thus, by Definition~\ref{def:uqrule} $q_u$ may not merge the preimage of $x'$ 
with anything else, especially not with the preimage of $x$. Since neither 
$\pi'$ nor $q_u$ merge the preimage of $x'$ with anything, $x$ is not merged 
with anything via $q_u'$ as well. Thus, $\eta(x') = \eta(x)$ cannot hold and 
$\eta(x)$ has no preimage in $\overline{R}$.

We have shown that $\gamma(\mu_u'(x))$ is undefined if and only if 
$\mu_u''(\eta(x))$ is undefined, thus the commutativity $\gamma \circ \mu_u' = 
\mu_u'' \circ \eta$ follows from $\mu_u'$ being the reverse of $p_u'$ and 
$\mu_u''$ being the reverse of $q_u'' \circ p_u''$. \qed
\end{proof}

\begin{lemma}\label{lem:subgraph-split-rule-minimization}
Let $\rho = (r,U)$ be a rule and let $(\pi_i : L \ito L_i, \gamma_i : L_i 
\pto R_i)$ for $i \in \{1,2\}$ be two instantiations of $\rho$ with 
$(\pi_2, \gamma_2) = (\pi_1, \gamma_1) \instAdd u$ for some $u \in U$. 
Furthermore, let $\mu_L : L_2 \subArrow L_1$, $\mu_R : R_2 \subArrow R_1$, 
$\mu_R' : R_1 \subArrow R$ be subgraph morphisms with $\gamma_1 \circ \mu_L = 
\mu_R \circ \gamma_2$ and let $m : R \ito G$ be a match.
For every pushout complement $H_2$ of $\mu_R' \circ \mu_R \circ \gamma_2$ and 
$m$ where $m_2' : L_2 \ito H_2$ is total and injective, there is a pushout 
complement $H_1$ of $\mu_R' \circ \gamma_1$ and $m$ with $H_1 \subOrder H_2$.
\end{lemma}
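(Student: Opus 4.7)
The plan is to construct $H_1$ as a pushout and then apply the pushout stacking property (Lemma~\ref{lem:pushout-stacking}) to verify simultaneously that $H_1$ is the required pushout complement and that $H_1 \subOrder H_2$.

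First I would exploit the hypothesis $\gamma_1 \circ \mu_L = \mu_R \circ \gamma_2$ to rewrite the given composite as $\mu_R' \circ \mu_R \circ \gamma_2 = \mu_R' \circ \gamma_1 \circ \mu_L$. This presents the outer pushout square with corners $L_2, R, H_2, G$ as a square along the composite morphism $L_2 \xrightarrow{\mu_L} L_1 \xrightarrow{\mu_R' \circ \gamma_1} R$, and suggests a vertical factorisation through an intermediate object $H_1$ directly below $L_1$.

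The next step is to construct $H_1$ explicitly as the pushout of $\mu_L : L_2 \pto L_1$ and $m_2' : L_2 \ito H_2$, obtaining morphisms $m_1' : L_1 \pto H_1$ and $\tilde{\mu}_L : H_2 \pto H_1$. Three verifications are needed. First, $m_1'$ is total and injective: totality uses surjectivity of the subgraph morphism $\mu_L$ together with totality of $m_2'$, so every $y \in L_1$ has a unique preimage $x$ with $\mu_L(x) = y$ whose equivalence class $\{y, m_2'(x)\}$ in the pushout is non-empty and valid; injectivity follows from the injectivity of both $\mu_L$ and $m_2'$. Second, $\tilde{\mu}_L : H_2 \pto H_1$ is partial, surjective (by construction, since it inherits surjectivity from $\mu_L$), and injective (since $m_2'$ is injective), hence a subgraph morphism, which immediately yields $H_1 \subOrder H_2$. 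Third, $H_1$ is a pushout complement of $\mu_R' \circ \gamma_1$ and $m$: the universal property of the pushout produces a unique $h_1 : H_1 \pto G$ making the induced right square commute, and by Lemma~\ref{lem:pushout-stacking}(2), since the left square and the outer rectangle are both pushouts, the right square is a pushout as well.

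The main obstacle will be the first verification, namely confirming in the SPO pushout construction with partial morphisms that $m_1'$ really is total and injective on all of $L_1$. The concrete worry is that the equivalence class of some $y = \mu_L(x) \in L_1$ could be invalidated by being merged with $m_2'(x')$ for another $x' \in L_2$ where $\mu_L(x')$ is undefined; ruling this out relies crucially on the injectivity of $m_2'$, which prevents $m_2'(x') = m_2'(x)$ for $x' \neq x$, combined with the injectivity of $\mu_L$ so that no two defined preimages collide. Once this careful case analysis of the equivalence classes is settled, the remainder of the argument is purely diagrammatic and proceeds directly from the stacking property of pushouts.
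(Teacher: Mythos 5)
Your proof is correct and follows essentially the same route as the paper's: both construct $H_1$ as the pushout of $\mu_L$ and $m_2'$, obtain the mediating morphism to $G$ from the universal property, apply the pushout composition/decomposition property (Lemma~\ref{lem:pushout-stacking}) to conclude that $H_1$ is a pushout complement of $\mu_R' \circ \gamma_1$ and $m$, and use the fact that subgraph morphisms are preserved under pushout along total injective morphisms to get $H_1 \subOrder H_2$. The only differences are cosmetic: the paper routes the argument through the intermediate pushouts $G_2$ (of $\gamma_2$, $m_2'$) and $G_1$ (of $\mu_R$, $m_2$) before assembling the same conclusion, whereas you apply the decomposition lemma directly to the two-square diagram, and your careful verification that $m_1'$ is total and injective, while sound, is not required by the statement being proved.
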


\begin{proof}
We will show this by using the fact, that subgraph morphisms are preserved by 
total pushouts and successively building the commuting diagram below.
\begin{center}
  \scalebox{1.0}{\begin{tikzpicture}[x=2cm,y=-2cm]

\begin{scope}
  \node (Lnp1) at (0,0) {$L_2$};
  \node (Rnp1) at (0.6,0.65) {$R_2$};
  \node (Hnp1) at (0,1) {$H_2$};
  \node (Gnp1) at (0.6,1.65) {$G_2$};
  \draw[mor-parl] (Lnp1) -- node [morlabel, above right] {$\gamma_2$} (Rnp1);
  \draw[mor-tot-inj] (Lnp1) -- node [morlabel, right] {$m'_2$} (Hnp1);
  \draw[mor-tot-inj] (Rnp1) -- node [morlabel, right] {$m_2$} (Gnp1);
  \draw[mor-parl] (Hnp1) -- node [morlabel, above right] {$\gamma'_2$} (Gnp1);
\end{scope}

\begin{scope}[shift={(1.5,0)}]
  \node (Ln) at (0,0) {$L_1$};
  \node (Rn) at (0.6,0.65) {$R_1$};
  \node (Hn) at (0,1) {$H_1$};
  \node (Gn) at (0.6,1.65) {$G_1$};
  \draw[mor-parl] (Ln) -- node [morlabel, above right] {$\gamma_1$} (Rn);
  \draw[mor-tot-inj] (Ln) -- node [morlabel, right] {$m'_1$} (Hn);
  \draw[mor-tot-inj] (Rn) -- node [morlabel, right] {$m_1$} (Gn);
  \draw[mor-parl] (Hn) -- node [morlabel, above right] {$\gamma'_1$} (Gn);
\end{scope}

\draw[mor-subgraph] (Lnp1) -- node [morlabel, above] {$\mu_L$} (Ln);
\draw[mor-subgraph] (Rnp1) -- node [morlabel, above] {$\mu_R$} (Rn);
\draw[mor-subgraph] (Gnp1) -- node [morlabel, above] {$\mu_G$} (Gn);
\draw[mor-subgraph] (Hnp1) -- node [morlabel, above] {$\mu_H$} (Hn);

\begin{scope}[shift={(3.1,0.65)}]
  \node (R) at (0,0) {$R$};
  \node (G) at (0,1) {$G$};
  \draw[mor-tot-inj] (R) -- node [morlabel, right] {$m$} (G);
\end{scope}

\draw[mor-subgraph] (Rn) -- node [morlabel, above] {$\mu_R'$} (R);
\draw[mor-subgraph] (Gn) -- node [morlabel, above] {$\mu_G'$} (G);
\draw[mor-parl, rounded corners] (Hnp1) to[bend right=10] ($(Gnp1)+(0,0.35)$)  
to[bend right=10] node [morlabel, near start, above] {$\gamma_2''$} (G);

\end{tikzpicture}
\end{center}

Let $H_2$ be a pushout complement of $\mu_R' \circ \mu_R \circ \gamma_2$ and 
$m$, where $m_2'$ is total and injective. We compute the pushout $G_2$ of 
$\gamma_2$ and $m_2'$ and then the pushout $G_1$ of $\mu_R$ and $m_2$. Due to 
Lemma~\ref{lem:pushout-stacking}, $G_1$ is also the pushout of $\mu_R \circ 
\gamma_2$ and $m_2'$ and therefore there is a unique $\mu_G' : G_1 \subArrow G$ 
such that the diagram commutes. Since $G$ is the pushout of $\mu_R' \circ \mu_R 
\circ \gamma_2$ and $m_2'$, the rightmost square is in fact a pushout as well.
Now form the pushout $H_1$ of $\mu_L$ and $m_2'$. Again the existence of 
$\gamma_1'$ follows from the pushout properties. Since the diagram $m_1 \circ 
\gamma_1 \circ \mu_L = \gamma_1' \circ \mu_H \circ m_2'$ commutes with the 
pushout $m_1 \circ \mu_R \circ \gamma_2 = \mu_G \circ \gamma_2' \circ m_2'$, it 
is also a pushout and hence, $G_1$ is a pushout of $\gamma_1$ and $m_1'$. This 
means that $H_1$ is in fact a pushout complement of $\mu_R' \circ \gamma_1$ and 
$m$. Since subgraph morphisms are preserved by total pushouts, $\mu_H$ is a 
subgraph morphism. Thus, $H_1 \subOrder H_2$. \qed
\end{proof}

\begin{proposition_app}{prop:inst-bound-exists}
\propInstBoundExists
\end{proposition_app}

\begin{proof}
Let $\iota_{k-1} = (\pi : L \ito \overline{L}, \gamma : \overline{L} \pto 
\overline{R})$ be a rule instantiation of length $k-1$ of $(r,U)$ such that 
$\iota_k = (\pi, \gamma) \instAdd u$ for some $u \in U$, let $\nu : 
\overline{R}_u \subArrow R$ be a subgraph morphism and let $m : R \ito G$ be a 
co-match as shown in the diagram below.
\begin{center}
  \scalebox{1.0}{\begin{tikzpicture}[x=1.5cm,y=-1.5cm]

\node (L) at (0,0) {$L$};
\node (oL) at (1,0) {$\overline{L}$};
\node (oR) at (2,0) {$\overline{R}$};
\node (Li) at (0,1) {$L_u$};
\node (oLi) at (1,1) {$\overline{L}_u$};
\node (Ri2) at (1,2) {$R_u'$};
\node (Ri) at (0,2) {$R_u$};
\node (oR2) at (2,1) {$\overline{R}'$};
\node (oRi) at (2,2) {$\overline{R}_u$};
\node (R) at (3,2) {$R$};
\node (G) at (4,2) {$G$};

\draw[mor-tot-inj] (L) -- node [morlabel, above] {$\pi$} (oL);
\draw[mor-parl] (oL) -- node [morlabel, above] {$\gamma$} (oR);
\draw[mor-tot-inj] (L) -- node [morlabel, left] {$p_u$} (Li);
\draw[mor-tot-inj] (oL) -- node [morlabel, left] {$p_u'$} (oLi);
\draw[mor-tot-inj] (Li) -- node [morlabel, above] {$\pi'$} (oLi);
\draw[mor-parl] (Li) -- node [morlabel, left] {$q_u$} (Ri);
\draw[mor-parl] (oLi) -- node [morlabel, above right] {$\eta$} (oRi);
\draw[mor-tot-inj] (oR) -- node [morlabel, right] {$p_u''$} (oR2);
\draw[mor-tot-inj] (Ri) -- node [morlabel, above] {$\pi''$} (Ri2);
\draw[mor-parl] (oLi) -- node [morlabel, above] {$\gamma'$} (oR2);
\draw[mor-parl] (oLi) -- node [morlabel, left] {$q_u'$} (Ri2);
\draw[mor-parl] (Ri2) -- node [morlabel, above] {$\gamma''$} (oRi);
\draw[mor-parl] (oR2) -- node [morlabel, right] {$q_u''$} (oRi);
\draw[mor-subgraph] (oLi) to[bend right] node [morlabel, right] {$\mu_u'$} (oL);
\draw[mor-subgraph] (oRi) to[bend right=45] node [morlabel, right] {$\mu_u''$} 
(oR);
\draw[mor-subgraph] (oRi) -- node [morlabel, above] {$\nu$} (R);
\draw[mor-subgraph] (oR) to[bend left=35] node [morlabel, right] {$\nu'$} (R);
\draw[mor-tot-inj] (R) -- node [morlabel, above] {$m$} (G);

\end{tikzpicture}
\end{center}

From Lemma~\ref{lem:large2small-rule-subgraph-existence} we know that $\mu_u'$ 
and $\mu_u''$ exist and the diagram commutes. We will show the existence of a 
subgraph morphism $\nu' : \overline{R} \subArrow R$ satisfying $\nu = \nu' 
\circ \mu_u''$. Then from Lemma~\ref{lem:subgraph-split-rule-minimization} it 
follows that every graph computed by a backward step of $\nu \circ \eta$, the 
instantiation $\iota_k$, is already represented by a backward step of 
$\nu' \circ \gamma$, the instantiation $\iota_{k-1}$.

First assume that $\gamma''(\pi''(x_u))$ is undefined for every $x_u \in R_u$ 
which has no preimage under $q_u \circ p_u$. We can show that $q_u'' \circ 
p_u''$ is a subgraph morphism by showing that it is surjective. Assume there 
is an $\overline{x}_u \in \overline{R}_u$ without preimage under $q_u'' \circ 
p_u''$. Since the large square is a pushout, there is an $x_u' \in R_u$ with 
$\gamma''(\pi''(x_u')) = \overline{x}_u$. By the first assumption $x_u'$ must 
have a preimage $x \in L$ under $q_u \circ p_u$ for $\gamma''(\pi''(x_u'))$ to 
be defined. Due to the commutativity, $\gamma(\pi(x))$ is defined and there is 
a preimage of $\overline{x}_u$ in $\overline{R}$, violating the 
second assumption. Hence, $q_u'' \circ p_u''$ is a subgraph morphism 
commuting with $\mu_u''$ (in fact $\overline{R}$ and $\overline{R}_u$ are 
isomorphic). The morphism $\nu' = \nu \circ q_u'' \circ p_u''$ satisfies the 
necessary properties.

If at least one quantification within $\iota_k$ satisfies the previous 
restriction, by Lemma~\ref{lem:rule-instantiation-swap} we can assume 
w.l.o.g.~that it is the last quantification step. So assume for every 
quantification step there is at least one $x_u \in R_u$ without preimage under 
$q_u \circ p_u$ such that $\gamma''(\pi''(x_u))$ is defined. Since 
$\gamma''(\pi''(x_u))$ has no preimage under $q_u'' \circ p_u''$ (otherwise it 
would have a preimage in $L$), the graph $\overline{R}_u$ has at least $k$ 
nodes and edges. Thus, since $R$ has less than $k$ nodes and edges, for at 
least one instantiation step within $\iota_k$ for every $x_u' \in R_u$ without 
a preimage under $q_u \circ p_u$, the image $\nu(\gamma''(\pi''(x_u')))$ is 
undefined. Again by Lemma~\ref{lem:rule-instantiation-swap} we can assume 
w.l.o.g.~that it is the last quantification of $\iota_k$.

In this case $\nu' = \nu \circ q_u'' \circ p_u''$ satisfies the necessary 
conditions. Obviously $\nu'$ is injective and $\nu = \nu' \circ \mu_u''$ holds, 
so it remains to be shown that it is surjective. Assume there is an $y \in R$ 
without a preimage under $\nu'$. Since $\nu$ is injective and surjective, there 
is exactly one $\overline{y}_u \in \overline{R}_u$ with $\nu(\overline{y}_u) = 
y$. Because of commutativity, $\overline{x}_u$ cannot have a preimage under 
$q_u'' \circ p_u''$. Since the outer square is a pushout, there has to be an 
$y_u \in R_u$ with $\gamma''(\pi''(y_u)) = \overline{y}_u$. By assumption this 
$y_u$ has a preimage under $q_u \circ p_u$ (otherwise 
$\nu(\gamma''(\pi''(y_u)))$ would be undefined), which in turn has an image in 
$\overline{R}$. By commutativity $y$ must have a preimage under $\nu'$. Thus, 
$\nu'$ is surjective and a subgraph morphism. \qed
\end{proof}

\begin{lemma}\label{lem:minimal-pocs-finite}
Let $r : L \pto R$ be a partial morphism and let $m : R \ito G$ be total and 
injective. The set of pushout complements $G'$ of $m$ and $r$ where $m' : L 
\ito G'$ is injective has finitely many minimal elements and this minimal 
elements are computable.
\end{lemma}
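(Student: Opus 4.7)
The plan is to construct the minimal pushout complements explicitly by inverting the SPO pushout, and to read off finiteness from the fact that only finitely many discrete choices enter the construction.

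First, I would use the equivalence-class description of the SPO pushout from the Preliminaries to characterize any candidate $G'$. Every element of $G$ arises in the pushout as one of three kinds of equivalence class: a merged class $\{r(x), m'(x)\}$ for some $x \in \mathrm{dom}(r)$, a singleton $\{y\}$ with $y \in R \setminus r(\mathrm{dom}(r))$ (a ``created'' element), or a singleton lifted from $G' \setminus m'(L)$ (a ``preserved'' element). Since $m$ is given, the merged and created parts of $G$ are pinned down, so the preserved portion is exactly $G \setminus m(R)$.

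Second, I would show that for a \emph{minimal} $G'$ the node and edge sets are essentially forced: one injective copy of $L$ (realised by $m'$), together with one preimage per element of $G \setminus m(R)$. Any extra preimage would destroy minimality; any extra element on which $\phi'$ is undefined, other than those coming from $L \setminus \mathrm{dom}(r)$, can be discarded without changing the pushout. The incidences are then pinned down as well: edges inside $m'(L)$ carry $L$'s structure, whereas edges lifted from $G \setminus m(R)$ inherit $G$'s incidences, with any endpoint lying in $m(r(\mathrm{dom}(r)))$ translated back to the corresponding $m'(x)$ via the injectivity of $m$ and of $r$ restricted to $\mathrm{dom}(r)$ (which is forced by the required injectivity of $m'$). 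This yields a dangling-type condition: if some preserved edge of $G$ is incident to a created node, no pushout complement for the given co-match exists.

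The residual freedom — finitely many compatible ways to lift preserved-edge incidences and to handle the attachment of $L \setminus \mathrm{dom}(r)$ to the merged portion — is bounded in terms of $|L|$ and $|G|$, hence the set of minimal candidates is finite. For each candidate one can build $G'$ explicitly, verify it is a pushout complement by computing the SPO pushout of $r$ and the candidate $m'$ and comparing with $G$, and then extract the minimal elements with respect to $\subOrder$ by pairwise comparison; all of this is effective. The main obstacle I anticipate is the careful bookkeeping of invalidated equivalence classes in the partial-morphism setting: undefinedness of $r$ propagates from nodes to their incident edges, and one must distinguish edges that must be present in $G'$ to witness the rewriting from those that could be added freely but would only destroy minimality. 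Once that propagation is handled rigorously, both the finiteness and the computability claims follow mechanically.
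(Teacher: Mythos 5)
Your proof takes a genuinely different route from the paper's. The paper does not construct the minimal pushout complements from scratch: it invokes the computation of minimal pushout complements for the minor ordering with conflict-free matches from earlier work and adapts it, adding only the observations that the co-match may be assumed injective and that any pushout complement with a non-injective $m'$ can be discarded together with everything above it in the subgraph order (a subgraph morphism composed with a non-injective match stays non-injective), so that one obtains a subset of an already-finite set. You instead unwind the equivalence-class description of the SPO pushout and argue directly that a junk-free pushout complement consists of an injective copy of $L$ glued to a lift of $G \setminus m(R)$, that the only unbounded degree of freedom is the addition of edges incident to nodes $m'(x)$ with $r(x)$ undefined (these are erased by the pushout, so they never occur in a minimal complement), and that the remaining choices are finitely many. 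This is sound and has the advantage of being self-contained, of making the dangling-type non-existence condition explicit, and of explaining \emph{why} there are infinitely many pushout complements but only finitely many minimal ones; the paper's version is much shorter but leans entirely on the cited construction.

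One inaccuracy to repair: you assert that injectivity of $m'$ forces $r$ to be injective on its domain. It does not --- $r$ may identify $x_1 \neq x_2 \in \mathrm{dom}(r)$, in which case the pushout merges $m'(x_1)$ and $m'(x_2)$ in $G$ while $m'$ remains injective. Consequently a preserved edge of $G$ incident to $m(r(x_1))$ may be lifted to an edge incident to either $m'(x_1)$ or $m'(x_2)$, and different choices can yield $\subOrder$-incomparable complements, so the incidences are not ``pinned down'' as claimed. Your closing appeal to the residual freedom being bounded in terms of $|L|$ and $|G|$ already absorbs this extra branching, so the finiteness and computability conclusions are unaffected; only the sentence deriving uniqueness of the lift from a forced injectivity of $r$ needs to be weakened.
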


\begin{proof}
In \cite{jk:minor-wqo} is was shown how the minimal pushout complements with 
respect to the minor ordering and conflict-free matching can be computed. The 
procedure can be easily adapted to subgraphs with injective matching. First the 
co-match $m$ can be assumed to injective, since a non-injective co-match 
implies a non-injective match $m'$ (see Lemma~\ref{lem:preserve-injection}). 
Furthermore a pushout complement with a non-injective $m'$ can be dropped since 
neither it nor any larger pushout complement has an injective match, as shown 
below.
\begin{center}
  \scalebox{1.0}{\begin{tikzpicture}[x=1.5cm, y=-1.5cm]

\node (H2) at (-1,1) {$H'$};
\node (L) at (0,0) {$L$};
\node (R) at (1,0) {$R$};
\node (H1) at (0,1) {$H$};
\node (G) at (1,1) {$G$};
\draw[mor-subgraph] (H2) -- node[midway, above] {$\mu$} (H1);
\draw[mor-tot] (L) -- node[midway, above left] {$m''$} (H2);
\draw[mor-parl] (L) -- node[midway, above] {$\gamma$} (R);
\draw[mor-tot] (L) -- node[midway, right] {$m'$} (H1);
\draw[mor-tot-inj] (R) -- node [midway, right] {$m$} (G);
\draw[mor-parl] (H1) -- node [midway, above] {$\gamma'$} (G);

\end{tikzpicture}
\end{center}
Assume there are two pushout complements $H$, $H'$ with $H \subOrder H'$ and 
$m'$ is non-injective. There is a subgraph morphism $\mu : H' \subArrow H$ such 
that the diagram above commutes, especially $m' = \mu \circ m''$ holds. Since 
$m'$ is non-injective but $\mu$ is injective, $m''$ must be non-injective as 
well. Thus $H'$ can be dropped as well.

Since we obtain a subset of the minimal pushout complements of 
\cite{jk:minor-wqo}, the finiteness of this set is preserved. \qed
\end{proof}

\begin{lemma_app}{lem:proc-correctness1}
\lemmaProcCorrectnessOne
\end{lemma_app}

\begin{proof}
We have proven this statement in \cite{wsts-gts-framework} for rules without 
universal quantification and for conflict-free matches. The proof can be 
directly extended to this setting by using the fact that every injective match 
is 
automatically conflict-free. Since the number of instantiations is bounded by  
$\instBound[\rho]{S}$ for every rule $\rho$ and the number of subgraph 
morphisms $\mu : \overline{R} \subArrow R'$ is finite (up to isomorphism), the 
number of morphisms for which the pushout complement need to be computed is 
finite as well. Furthermore by Lemma~\ref{lem:minimal-pocs-finite} the set of 
minimal pushout complements is finite and computable, thus $\predBasis[k]{G}$ 
is a finite sets. \qed
\end{proof}

\begin{lemma_app}{lem:proc-correctness2}
\lemmaProcCorrectnessTwo
\end{lemma_app}

\begin{proof}
Let $G_0$ be an element of $\upclosed{\pred[\boundedPathk]{\upclosed{\{G\}}}}$. 
Then there is a minimal representative $G_1 \in \pred{\upclosed{\{G\}}}$ with 
$G_1 \subOrder G_0$ via some morphism $\nu : G_0 \subArrow G_1$ and an 
instantiation $(\pi : L \ito \overline{L}, \gamma : \overline{L} \pto 
\overline{R}$ of some rule $\rho$ rewriting $G_1$ with a injective match $m$ 
satisfying the application conditions of 
Definition~\ref{def:uqrule-application} to some element $G_2$ of 
$\upclosed{\{G\}}$. In \cite{wsts-gts-framework} it was shown that subgraph 
morphisms are pushout closed. Since $m$ is injective and therefore 
conflict-free, the left diagram below can be extended to the right diagram 
below, where the inner and outer squares are pushouts.
\begin{center}
  \scalebox{1.0}{\begin{tikzpicture}[x=1.5cm, y=-1.5cm]

\begin{scope}[shift={(0,0)}]
  \node (G1a) at (-1,1) {$G_0$};
  \node (La) at (0,0) {$\overline{L}$};
  \node (R1a) at (1,0) {$\overline{R}$};
  \node (G2a) at (0,1) {$G_1$};
  \node (G3a) at (1,1) {$G_2$};
  \node (Sa) at (2,2) {$G$};
  \draw[mor-subgraph] (G1a) -- node[midway, above] {$\nu$} (G2a);
  \draw[mor-parl] (La) -- node[midway, above] {$\gamma$} (R1a);
  \draw[mor-tot] (La) -- node[midway, right] {$m$} (G2a);
  \draw[mor-tot] (R1a) -- node [midway, right] {$m'$} (G3a);
  \draw[mor-parl] (G2a) -- node [midway, above] {$\gamma'$} (G3a);
  \draw[mor-subgraph] (G3a) -- node [midway, above right] {$\mu$} (Sa);
\end{scope}

\begin{scope}[shift={(4,0)}]
  \node (G1b) at (-1,1) {$G_0$};
  \node (Lb) at (0,0) {$\overline{L}$};
  \node (R1b) at (1,0) {$\overline{R}$};
  \node (G2b) at (0,1) {$G_1$};
  \node (G3b) at (1,1) {$G_2$};
  \node (Sb) at (2,2) {$G$};
  \node (R2b) at (2,0) {$\overline{R}'$};
  \node (G4b) at (0,2) {$G_3$};
  \draw[mor-subgraph] (G1b) -- node[midway, above] {$\nu$} (G2b);
  \draw[mor-parl] (Lb) -- node[midway, above] {$\gamma$} (R1b);
  \draw[mor-tot] (Lb) -- node[midway, right] {$m$} (G2b);
  \draw[mor-tot] (R1b) -- node [midway, right] {$m'$} (G3b);
  \draw[mor-parl] (G2b) -- node [midway, above] {$\gamma'$} (G3b);
  \draw[mor-subgraph] (G3b) -- node [midway, above right] {$\mu$} (Sb);
  \draw[mor-subgraph] (R1b) -- node [midway, above] {$\mu_R$} (R2b);
  \draw[mor-tot] (R2b) -- node [midway, right] {$n$} (Sb);
  \draw[mor-subgraph] (G2b) -- node [midway, right] {$\mu_G$} (G4b);
  \draw[mor-parl] (G4b) -- node [midway, above] {$s$} (Sb);
\end{scope}

\end{tikzpicture}
\end{center}
Since $m$ and $\mu_G$ are injective, $\mu_G \circ m$ is injective as well and 
because of Lemma~\ref{lem:preserve-injection} we know that $n$ is also 
injective. Furthermore the pushout closure guarantees that $\mu_G \circ m$ is 
total. Since $m$ satisfied the application condition, every edge in $G_1$ 
incident to a universally quantified node has a preimage in $\overline{L}$ and 
therefore also an image in $G_3$. The surjectivity of $\mu_G$ ensures that the 
application condition is also satisfied by $\mu_G \circ m$. Note that since 
$G_1$ is an element of $\boundedPathk$ and $\boundedPathk$ is downward-closed, 
$G_3$ is also in $\boundedPathk$.

Since the outer square is a pushout, $G_3$ is a pushout complement object. 
Thus, a graph $G_4$ with $\mu_G' : G_4 \subOrder G_3$ will be obtained by the 
procedure \predBasis[k]{} in Step~\ref{backstep:calculate-pocs} using the 
instantiation $\mu_R \circ \gamma$. By the same argument as above $\mu_G' \circ 
\mu_G \circ m$ satisfies the application condition and is an element of 
$\boundedPathk$, thus $G_4$ will not be dropped by the procedure. Summarized, 
this means that \predBasis[k]{} computes a graph $G_4$ for every graph $G_0$ 
such that $G_4 \subOrder G_3 \subOrder G_1 \subOrder G_0$, i.e.~every $G_0$ is 
represented by an element of \predBasis[k]{G}. \qed
\end{proof}

\begin{lemma_app}{lem:pre-to-post-condition}
\lemmaPreToPostCondition
\end{lemma_app}

\begin{proof}
Assume there is a $x \in \qNodes(u)$ where $x' = m(\gamma(\pi(x)))$ is defined 
and there is an edge $e$ attached to $x'$ without preimage in $\overline{R}$. 
Furthermore, assume $H$ with $m' : \overline{L} \to H$ and $\gamma' : H \pto G$ 
is a pushout complement of $\gamma$, $m$. Since the diagram is a pushout, there 
is an $e' \in H$ with $\gamma'(e') = e$, otherwise the mediating morphism does 
not exist or is not unique. By commutativity of the diagram, $e'$ is attached 
to $m'(\pi(x))$ and there cannot be an $e'' \in \overline{L}$ with $m'(e'') = 
e'$. Since $x \in \qNodes(u)$, this violates the condition of 
Definition~\ref{def:uqrule-application}. \qed
\end{proof}

\begin{lemma_app}{lem:better-bound}
\lemmaBetterBound
\end{lemma_app}

\begin{proof}
This can be shown by using the proof ideas of 
Proposition~\ref{prop:inst-bound-exists}. If $\nu$ does not delete all elements 
of $\overline{R}_u$ which where created in the instantiation step, 
$\overline{R}_u$ contains at least one node more than $R$. If the created 
element not deleted by $\nu$ is an edge, by conditions of this lemma, it is 
incident to a created node not deleted by $\nu$. Thus, every non-negligible 
instantiation step increases the number of nodes of the right side by at least 
one. No matchings can exist if the number of instantiation steps is larger than 
the number of nodes in $G$. \qed
\end{proof}
}

\end{document}